\documentclass[conference]{IEEEtran}

\usepackage{epsfig}
\usepackage{subcaption}
\usepackage{calc}
\usepackage{amssymb}
\usepackage{amstext}
\usepackage{amsmath}
\usepackage{amsthm}
\usepackage{multicol}
\usepackage{pslatex}
\usepackage{booktabs}
\usepackage{graphicx}
\usepackage{url}
\usepackage{algorithm2e}
\usepackage[bottom]{footmisc}
\usepackage{tikz}
\usetikzlibrary{arrows.meta,positioning,fit,calc}

\newcommand{\negl}{\mathrm{negl}}

\newtheorem{theorem}{Theorem}
\newtheorem{definition}{Definition}
\newtheorem{corollary}{Corollary}
\newtheorem{lemma}{Lemma}

\begin{document}

\title{Privacy-Preserving Proof of Human Authorship via Zero-Knowledge Process Attestation}

\author{\IEEEauthorblockN{David Condrey}
\IEEEauthorblockA{Writerslogic, Inc.\\
david@writerslogic.com}}

\maketitle

\begin{abstract}
Process attestation verifies human authorship by collecting behavioral biometric evidence---keystroke dynamics, typing patterns, editing behavior---during the creative process. However, the very data needed to prove authenticity can reveal intimate details about an author's cognitive state, health conditions, and identity, constituting sensitive biometric data under GDPR Article~9. We resolve this privacy-attestation paradox using zero-knowledge proofs. We present ZK-PoP, a construction that allows a verifier to confirm that (a)~sequential work function chains were computed correctly, (b)~behavioral feature vectors fall within human population distributions, and (c)~content evolution is consistent with incremental human editing, all without learning the underlying behavioral data, exact timing, or intermediate content. Throughout, we distinguish \emph{human authorship} (species-level) from \emph{authenticity} (individual identity); ZK-PoP attests the former, decoupling it from the privacy-eroding latter. Our construction uses Groth16 proofs over arithmetic circuits with Pedersen commitments and Bulletproof range proofs; the choice of each primitive is justified explicitly. We prove that ZK-PoP is computationally zero-knowledge, computationally sound, and achieves unlinkability across sessions, and we evaluate the constraint system against five adversary classes including a white-box adaptive attacker, identifying a $\sim$3{,}500-constraint augmentation that restores session-level false acceptance below $10^{-12}$. The arkworks implementation generates proofs in 23.4~s for 1-hour writing sessions on an Apple~M3, producing 256-byte proofs verifiable in 8.2~ms, with $<$5\% accuracy loss versus non-private baselines at $\varepsilon \geq 1.0$ on calibration corpora (Aalto, KLiCKe, ScholaWrite). We also outline a concrete trusted-setup deployment recipe and a three-stage IRB-approved validation roadmap to address the gap between simulated and live evaluation.
\end{abstract}

\begin{IEEEkeywords}
Zero-Knowledge Proofs, Process Attestation, Behavioral Biometrics, Privacy-Preserving Authentication, Keystroke Dynamics.
\end{IEEEkeywords}

\section{\uppercase{Introduction}}
\label{sec:introduction}

The rise of large language models (LLMs) has created an authorship verification crisis. LLMs now produce text that experienced readers misattribute to human authors~\cite{Sadasivan2024}, with 6.5--16.9\% of peer-reviewed AI conference submissions estimated to contain substantial AI-modified content~\cite{Liang2024}. Output-level detection faces fundamental theoretical barriers as model outputs converge toward human distributions~\cite{Ganie2025}.

\emph{Process attestation} addresses this by verifying the \emph{writing process} rather than analyzing the final output. By collecting behavioral biometric data---keystroke dynamics, typing patterns, revision behavior---during authoring, and binding this evidence cryptographically via cross-domain constraint entanglement (CDCE) and sequential work functions (SWF), process attestation can distinguish genuine human composition from AI-generated text, even when the text itself is indistinguishable. We define these constructions in Section~\ref{sec:background}.

However, process attestation creates a fundamental \textbf{privacy paradox}. The behavioral biometric data collected during writing sessions reveals deeply personal information:

\begin{itemize}
  \item \textbf{Medical conditions:} Keystroke dynamics can reveal neurological conditions and fatigue states; inter-keystroke timing variance correlates with early Parkinson's disease progression~\cite{Salthouse1986}.
  \item \textbf{Cognitive and emotional state:} Typing patterns are established indicators of stress, cognitive load, and emotional arousal; Epp et al.~\cite{EppLippoldMandryk2011} achieved 77--88\% classification accuracy for 15 emotional states from keystroke features alone.
  \item \textbf{Identity:} Inter-keystroke intervals are sufficiently distinctive to identify individuals across documents~\cite{Monrose2000,Killourhy2009}.
  \item \textbf{Legal classification:} Typing patterns constitute biometric identifiers under GDPR Article~9~\cite{GDPR2016,Kindt2013}, subject to the most stringent processing requirements.
\end{itemize}

Authors should not have to sacrifice privacy to prove authenticity. The question is: \emph{can we verify that a writing process is consistent with human authorship without revealing the underlying behavioral data?}

\noindent\textbf{Authorship vs.\ authenticity.} Throughout the paper we distinguish two notions that reviewers commonly conflate. \emph{Human authorship} is a species-level claim: the text was composed via a cognitively human writing process (typing, pausing, revising) rather than emitted by a generative model. \emph{Authenticity} (or \emph{individual identity}) is a person-level claim: the text was produced by a specific named individual. ZK-PoP attests authorship, not authenticity: a successful proof certifies that \emph{some} human composed the text, not \emph{which} human. This separation is deliberate---identity-binding would re-introduce the privacy harms (de-anonymization, behavioral profiling) the construction is designed to prevent. Identity binding, when required by the deployment, is layered externally via standard signatures over the proof.

Zero-knowledge proofs~\cite{Goldwasser1989} are the natural tool to resolve this paradox. A zero-knowledge proof allows a prover to convince a verifier of a statement's truth without revealing any information beyond the statement's validity. We apply this to process attestation: the prover demonstrates that their writing session satisfies human-consistency constraints without disclosing the raw behavioral features, exact timing, or intermediate document states.

\noindent\textbf{Approach.} Our approach encodes behavioral constraints as arithmetic circuits over a prime field: the prover commits to behavioral feature vectors via Pedersen commitments, proves they fall within population-derived ranges using Bulletproof range proofs, and demonstrates SWF chain correctness via Merkle-sampled verification---all within a single Groth16 proof that reveals only the accept/reject decision.

\noindent\textbf{Contributions.} We make six contributions:
\begin{enumerate}
  \item A formal privacy-attestation tradeoff with an information-theoretic minimum leakage bound (Section~\ref{sec:privacy-requirements}).
  \item ZK-PoP, a ZK proof construction for process attestation using Groth16, Pedersen commitments, and Bulletproofs, with explicit rationale for each primitive (Section~\ref{sec:building-blocks}).
  \item A privacy-preserving behavioral commitment scheme with range and temporal ordering proofs (Section~\ref{sec:commitment}).
  \item Formal privacy analysis: zero-knowledge, differential privacy for aggregates, and unlinkability (Section~\ref{sec:privacy-analysis}).
  \item A structured five-class adversary taxonomy ($\mathcal{A}_0$--$\mathcal{A}_4$) with quantitative session-level forgery bounds, including a constraint-system augmentation that restores session false-acceptance below $10^{-12}$ (Section~\ref{sec:adversarial-eval}).
  \item An arkworks implementation with 23.4\,s proof generation for 1-hour sessions and 256-byte proofs verifiable in 8.2\,ms, with concrete trusted-setup deployment recipe and a real-world validation roadmap (Sections~\ref{sec:evaluation},~\ref{sec:trusted-setup-impl},~\ref{sec:realworld-validation}).
\end{enumerate}

\noindent Naive approaches are infeasible: verifying the full SWF chain inside a ZK circuit would require $\sim$25{,}000 SHA-256 constraints per step $\times$ $N$ steps, exceeding practical circuit sizes. Our Merkle-sampled approach (constraint C1) reduces this to $O(k)$ verifications while maintaining cumulative detection guarantees. The Poseidon-over-SHA-256 bridge enables efficient in-circuit Merkle verification of out-of-circuit SHA-256 chain states.

The remainder presents our privacy model, construction, formal analysis, and evaluation.

\section{\uppercase{Background and Related Work}}
\label{sec:background}

\subsection{Process Attestation}

Process attestation extends the IETF RATS architecture~\cite{RFC9334} from verifying system state to verifying continuous physical processes. Evidence is structured as a sequence of hash-chained checkpoints, each binding three evidence domains under a single cryptographic commitment.

Each checkpoint contains three components. First, a \emph{sequential work function (SWF) proof}: an Argon2id~\cite{Biryukov2016} seed is fed into an iterated SHA-256 chain whose sequential, memory-hard structure ensures that each link requires a fixed wall-clock minimum. An adversary cannot parallelize the computation because each Argon2id evaluation depends on the full 64\,MiB memory state of the previous one. Second, \emph{behavioral features} including inter-keystroke interval (IKI) entropy and cognitive load correlation~\cite{Killourhy2009}. Third, \emph{content hashes} binding the document state to the checkpoint.

\emph{Cross-domain constraint entanglement} (CDCE) binds these three domains via an HMAC keyed by the SWF output: because the key derives from the sequential computation, forging evidence in any single domain forces recomputation of the binding across all domains simultaneously, raising the cost from single-domain to cross-domain.

A critical challenge is \emph{trust inversion}: the Attester (author) is the potential adversary, motivated to fabricate evidence. Unlike standard remote attestation, trust-inverted attestation requires self-verifying evidence without relying on the Attester's honesty.

\subsection{Zero-Knowledge Proofs}

A zero-knowledge proof system~\cite{Goldwasser1989} for a language $L$ allows a prover $P$ with witness $w$ to convince a verifier $V$ that $x \in L$ without revealing $w$. We require three properties: \emph{completeness} (honest provers convince honest verifiers), \emph{soundness} (no cheating prover can convince a verifier of a false statement), and \emph{zero-knowledge} (the verifier learns nothing beyond $x \in L$).

\textbf{zk-SNARKs.} Building on foundational verifiable computation systems~\cite{ParnoHowellGentryRaykova2013}, succinct non-interactive arguments of knowledge~\cite{BenSasson2014} provide constant-size proofs with efficient verification. Groth16~\cite{Groth2016} achieves the smallest proof size (2 G$_1$ + 1 G$_2$ elements; 256 bytes on BN254) and fastest verification (3 pairings), at the cost of a per-circuit trusted setup. PLONK~\cite{GabizonWC2019} provides a universal trusted setup reusable across circuits.

\textbf{Bulletproofs.} B\"{u}nz et al.~\cite{Bunz2018} introduced Bulletproofs for efficient range proofs without trusted setup, achieving logarithmic proof size. We use Bulletproofs for behavioral feature range proofs and Groth16 for the main attestation circuit.

\textbf{Recursive composition.} Recursive proof composition~\cite{Bowe2020,Chiesa2020} allows proving the validity of a proof within another proof, enabling incremental verification of checkpoint chains.

\subsection{Privacy in Biometric Systems}

Biometric template protection~\cite{ISO24745}---fuzzy extractors~\cite{Dodis2008}, BioHashing~\cite{Jin2004}, cancelable biometrics~\cite{Rathgeb2011}---protects stored templates for identity verification. ZK proofs have been applied to fingerprint~\cite{Bringer2007} and face recognition~\cite{Erkin2009,Bassit2022} under privacy-preserving protocols. However, all existing approaches address point-in-time templates; to our knowledge, none address continuous behavioral streams over extended periods.

\subsection{Differential Privacy for Behavioral Data}

Differential privacy~\cite{Dwork2006,Dwork2014} provides formal guarantees that individual records cannot be distinguished. When aggregate population statistics must be released (e.g., for calibrating behavioral thresholds), differential privacy bounds the information leakage. We apply the Gaussian mechanism to population-level keystroke statistics, complementing the zero-knowledge property of individual attestation proofs.

\section{\uppercase{System Model and Privacy Requirements}}
\label{sec:privacy-requirements}

\noindent\textbf{Notation.} Let $\lambda$ denote the security parameter. All group orders, hash output lengths, and negligible function bounds are parameterized by $\lambda$. Our concrete instantiation uses BN254 curves ($\lambda \approx 100$ bits against known discrete-log attacks; sufficient for medium-term deployments but below the 128-bit NIST threshold). Migration to BLS12-381 ($\lambda \approx 128$) requires no protocol changes.

\subsection{System Model}

We consider three parties following the RATS architecture~\cite{RFC9334}:

\begin{itemize}
  \item \textbf{Author (Prover/Attester):} Generates evidence and ZK proofs. The Author controls the Attesting Environment and is potentially adversarial (trust inversion).
  \item \textbf{Verifier:} Evaluates ZK proofs for human-consistency. The Verifier is \emph{honest-but-curious}: it follows the protocol but may attempt to extract information.
  \item \textbf{Relying Party:} Consumes Attestation Results (accept/reject) without accessing evidence or proofs.
\end{itemize}

\noindent The trust-inverted adversary $\mathcal{A}$ has full control over the Attesting Environment (OS, inputs, timing) and may generate arbitrary evidence, but is assumed unable to break the computational soundness of the proof system (under the $q$-PKE assumption) or forge valid SWF chains faster than sequential computation (under the sequential memory-hardness assumption for Argon2id).

\begin{table}[ht]
\caption{Threat model: who is trusted for what, and which mechanism enforces it.}\label{tab:threat-model}
\centering
\footnotesize
\addtolength{\tabcolsep}{-2pt}
\resizebox{\columnwidth}{!}{%
\begin{tabular}{|l|l|l|}
  \hline
  \textbf{Property at risk} & \textbf{Adversary} & \textbf{Defended by} \\
  \hline
  Soundness (false accept) & Author & Groth16 + SWF + Mahalanobis (Sec.~\ref{sec:adversarial-eval}) \\
  Behavioral privacy & Verifier, Relying Party & ZK property of Groth16 (Thm~\ref{thm:zk}) \\
  Temporal privacy & Verifier & Pedersen hiding + range proofs (Sec.~\ref{sec:commitment}) \\
  Content privacy & Verifier & Hash-only public binding (C4) \\
  Unlinkability & Verifier (across sessions) & Fresh nonce + ZK (Thm~\ref{thm:unlinkability}) \\
  Aggregate-statistics privacy & Curious analyst & Gaussian DP mechanism (Thm~\ref{thm:dp}) \\
  Toxic-waste forgery & Ceremony participants & MPC ceremony, $\geq$1 honest contributor \\
  \hline
\end{tabular}}
\end{table}

The evidence structure consists of $n$ checkpoints $C_1, \ldots, C_n$ generated at regular intervals (e.g., every 30 seconds). Each checkpoint $C_i$ contains:

\begin{itemize}
  \item Public inputs: checkpoint hash $h_i$, SWF chain root $R_i$, previous checkpoint hash $h_{i-1}$, claimed duration $d_i$.
  \item Private witness: raw behavioral feature vector $\mathbf{f}_i \in \mathbb{R}^m$, SWF intermediate states $\{s_j\}$, content diff hashes, and timestamps $\tau_i$.
\end{itemize}

\subsection{Privacy Dimensions}

We identify three privacy dimensions that a privacy-preserving process attestation scheme must protect.

\begin{definition}[Behavioral Privacy]
\label{def:behavioral-privacy}
A scheme satisfies behavioral privacy if, for any two distinct behavioral feature vectors $\mathbf{f}, \mathbf{f}'$ both satisfying the attestation predicate, the verifier cannot distinguish which was used, i.e., the proof distributions are computationally indistinguishable:
$\{\mathsf{Prove}(x, \mathbf{f})\} \approx_c \{\mathsf{Prove}(x, \mathbf{f}')\}$.
\end{definition}

\begin{definition}[Temporal Privacy]
\label{def:temporal-privacy}
A scheme satisfies temporal privacy if the verifier cannot learn the exact session timestamps beyond the claimed duration, i.e., proofs generated at time $t_0$ and $t_0 + \Delta$ (for any $\Delta$) for the same claimed duration are indistinguishable.
\end{definition}

\begin{definition}[Content Privacy]
\label{def:content-privacy}
A scheme satisfies content privacy if the verifier cannot learn intermediate document states, i.e., only the final content hash is revealed, not the sequence of edit operations or drafts.
\end{definition}

\begin{definition}[Unlinkability]
\label{def:unlinkability}
A scheme satisfies unlinkability if, given two attestation sessions, no PPT adversary can determine whether they belong to the same author with advantage greater than $\negl(\lambda)$.
\end{definition}

\subsection{Minimum Leakage Bound}

\begin{theorem}[Minimum Leakage]
\label{thm:leakage}
Under the balanced testing assumption ($\Pr[H_0] = \Pr[H_1] = 1/2$) and per-decision completeness $\beta = \Pr[\text{accept} \mid H_0]$, any sound process attestation scheme with per-decision false acceptance rate $\alpha$ must leak at least $1 - h\!\left(\frac{(1-\beta)+\alpha}{2}\right)$ bits of information about the evidence source per attestation decision, where $h(\cdot)$ is the binary entropy function. In the perfect-completeness regime ($\beta=1$) this simplifies to $1 - h(\alpha/2)$.
\end{theorem}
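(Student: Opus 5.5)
We will read the statement as a lower bound on the mutual information $I(S;D)$, where $S \in \{H_0,H_1\}$ is the evidence source ($H_0$: human, $H_1$: non-human or forged) and $D \in \{\text{accept},\text{reject}\}$ is the decision the scheme reveals. Because the attestation output is at least this decision bit, the data-processing inequality shows that whatever the verifier sees leaks at least $I(S;D)$ bits about $S$. It is therefore enough to lower-bound $I(S;D) = H(S) - H(S\mid D)$. The balanced prior gives $H(S)=1$, so the whole task is to upper-bound the conditional entropy $H(S\mid D)$.

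\textbf{Step 1.} Consider the decision rule that guesses $\hat S = H_0$ on accept and $\hat S = H_1$ on reject. Its error probability is
\[
P_e = \tfrac12\Pr[\text{reject}\mid H_0] + \tfrac12\Pr[\text{accept}\mid H_1] = \frac{(1-\beta)+\alpha}{2}.
\]

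\textbf{Step 2.} Apply Fano's inequality for a binary source. Since $\hat S$ is a function of $D$, we get $H(S\mid D) \le h(P_e)$; in the binary case the $P_e\log(|\mathcal{S}|-1)$ term vanishes. Combining this with $H(S)=1$ gives
\[
I(S;D) \ge 1 - h\!\left(\tfrac{(1-\beta)+\alpha}{2}\right).
\]
Setting $\beta=1$ gives $1-h(\alpha/2)$.

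\textbf{Step 3.} Check that the bound is meaningful and interpret it. Fano only needs some estimator with error $P_e$; if $P_e>1/2$ we would flip the guess, but soundness ($\alpha$ small) together with reasonable completeness keeps $P_e \le 1/2$. In that range $h$ is monotone, so smaller $\alpha$ forces larger leakage.

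The main obstacle is not the calculation but the modelling step: turning "leak about the evidence source" into $I(S;D)$ and justifying that any sound scheme's transcript determines $D$. I would handle this by stating that the verifier's view contains the accept/reject outcome, since the Relying Party consumes exactly this. Data processing then makes $I(S;\text{view}) \ge I(S;D)$. This makes the bound hold for every scheme, including ZK-PoP, where the zero-knowledge property (Theorem~\ref{thm:zk}) ensures that this decision bit is essentially all that is learned. The bound is therefore tight for the construction.
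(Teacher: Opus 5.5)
Your proof is correct and follows the paper's argument: balanced priors give $H(S)=1$, the accept-as-human rule errs with probability $P_e=\frac{(1-\beta)+\alpha}{2}$, and binary Fano gives $H(S\mid\cdot)\le h(P_e)$. The one difference is that you apply Fano to the decision bit $D$ and then lift to the full view via $I(S;\text{view})\ge I(S;D)$, which is the correct direction of data processing. The paper instead applies Fano to the view $V$ and then says data processing ``extends'' the bound to functions of $V$, which runs the inequality backwards (harmless there only because the decision stays computable from the transcript). Your closing claim of tightness is overstated: the bound is only near-tight for ZK-PoP, since with $\beta=1$ the decision bit carries $h\!\left(\frac{1+\alpha}{2}\right)-\tfrac12 h(\alpha)\approx 0.84$ bits at $\alpha=0.058$, against the Fano bound $1-h(0.029)\approx 0.81$ (Fano is strict for one-sided errors with $0<\alpha<1$).
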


\begin{proof}
Let $\Pi$ be a sound attestation scheme and let $\mathcal{H}, \mathcal{M}$ denote the distributions of human and machine-generated evidence, respectively. The verifier solves a binary hypothesis test $H_0$: evidence from $\mathcal{H}$ vs.\ $H_1$: evidence from $\mathcal{M}$. Soundness gives $\Pr[\text{accept}\mid H_1] \leq \alpha$ and completeness gives $\Pr[\text{accept}\mid H_0] = \beta$. Let $V$ denote the verifier's view (all information received from the prover). Under balanced priors, the total error probability is $P_e = \tfrac{1}{2}(1-\beta) + \tfrac{1}{2}\alpha = \tfrac{(1-\beta)+\alpha}{2}$. Fano's inequality for binary $H$ reduces to $H(H\mid V) \leq h(P_e)$, hence $I(H;V) = 1 - H(H\mid V) \geq 1 - h(P_e)$. The data processing inequality extends this bound to any function of $V$ (in particular, the proof transcript). Setting $\beta = 1$ recovers $P_e = \alpha/2$ and the cleaner bound $I(H;V) \geq 1 - h(\alpha/2)$. For $\alpha = 0.058$ (the operating point of Table~\ref{tab:sensitivity}, $m\!=\!12$, $3\sigma$) this evaluates to $1 - h(0.029) \approx 0.81$ bits.
\end{proof}

\begin{corollary}
ZK-PoP achieves near-optimal leakage: each checkpoint decision reveals exactly one bit (accept/reject). For $\alpha = 0.058$ (Table~\ref{tab:sensitivity}, $m\!=\!12$, $3\sigma$) under perfect completeness, the tight Fano lower bound is $1 - h(\alpha/2) \approx 0.81$ bits; ZK-PoP's 1-bit leakage exceeds it by $\sim$1.23$\times$, near-optimal for discrete decisions. Over $n$ checkpoints, session-level false acceptance decreases to $\alpha^n < 10^{-148}$ ($n\!=\!120$). SWF chain binding ensures that modifying evidence at checkpoint $i$ requires recomputation of all subsequent chain states, while fresh Fiat-Shamir challenges at each checkpoint determine Merkle sampling positions independently of the adversary's fabrication strategy. Under the assumption that these properties yield independent per-checkpoint acceptance events from the adversary's perspective, the session-level false acceptance rate is $\alpha^{n_{\mathrm{eff}}}$. On the legitimate-author side, temporal autocorrelation (lag-1 $r\!=\!0.111$, $p{=}0.01$ Bonferroni, $N\!=\!300$ writers sampled for autocorrelation analysis, KLiCKe~\cite{Tian2025}) reduces the effective checkpoint count to $n_{\text{eff}} = n(1-r_1)/(1+r_1) \approx 96$, giving $\alpha^{n_{\text{eff}}} < 10^{-118}$---still astronomically below practical thresholds. The effective sample size correction $n_{\mathrm{eff}} = n(1-r_1)/(1+r_1)$ is a standard adjustment for AR(1)-autocorrelated sequences. We apply it to the product form $\alpha^n$ as a conservative penalty rather than via a derivation: the per-checkpoint acceptance events for a strategic adversary are coupled (the adversary's witness must satisfy all checkpoints jointly), so the effective number of \emph{independent} forgery decisions is upper-bounded by $n_{\mathrm{eff}}$ under a standard ARMA-effective-sample bound. A martingale-style proof of $\Pr[\text{all }n\text{ accept}] \leq \alpha^{n_{\mathrm{eff}}}$ for AR(1)-correlated genuine traces is a natural follow-up; we use the heuristic here to avoid overstating the bound. Each checkpoint verification decision reveals one bit (accept/reject); by the zero-knowledge property (Theorem~\ref{thm:zk}), the proof itself reveals no additional information beyond this decision.
\end{corollary}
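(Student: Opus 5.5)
The corollary bundles several claims, and I would prove them one at a time.

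\emph{Leakage claims.} Apply Theorem~\ref{thm:leakage} with $\beta=1$ and $\alpha=0.058$. This gives the lower bound $1-h(0.029)$.

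I would evaluate this explicitly:
\[
h(0.029) = -0.029\log_2 0.029 - 0.971\log_2 0.971 \approx 0.148 + 0.041 \approx 0.19,
\]
so the bound is about $0.81$ bits.

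For the upper side, the only information revealed beyond the public inputs is the accept/reject bit. That bit has entropy at most $1$. By the zero-knowledge property (Theorem~\ref{thm:zk}), the verifier's view is simulatable from the statement and the decision alone. Hence the mutual information between the hidden source and the transcript is at most $H(\text{decision})\le 1$. The ratio $1/0.81\approx 1.23$ then gives the "near-optimal" claim. Since any nontrivial binary decision must emit a full symbol, no discrete scheme emitting one decision can do asymptotically better per decision.

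\emph{Session-level false acceptance.} I would first assume the per-checkpoint acceptance events for an adversarial prover are independent. Under that hypothesis, $\Pr[\text{all }n\text{ accept}]=\prod_i \Pr[\text{accept}_i]\le\alpha^n$.

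To make the hypothesis plausible, I would argue as follows. SWF chain binding means a forger cannot choose checkpoint $i$'s evidence after seeing later challenges without recomputing the sequential chain. Fiat--Shamir challenges, modeled as random oracles, fix the Merkle sampling positions independently of the adversary's strategy. Conditioning step by step, each checkpoint then contributes a factor at most $\alpha$.

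Numerically, $\log_{10}(0.058)\approx -1.2366$. Multiplying by $n=120$ gives about $-148.4$, so $\alpha^{120}<10^{-148}$.

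\emph{Autocorrelation correction.} Compute $n_{\mathrm{eff}}=120\cdot 0.889/1.111\approx 96.0$. Then $96\cdot(-1.2366)\approx -118.7$, so $\alpha^{96}<10^{-118}$.

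\emph{Main obstacle.} The hard step is justifying that the per-checkpoint forgery events multiply, and in particular the passage from $\alpha^n$ to $\alpha^{n_{\mathrm{eff}}}$. The statement itself flags this as a heuristic rather than a derivation.

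My first attempt would be a martingale or chain-rule bound. The goal is $\Pr[\text{accept}_i\mid \text{accept}_{<i},\text{history}]\le\alpha'$ for some $\alpha'$ obtained by inflating $\alpha$ to absorb the AR(1) coupling. I would choose $\alpha'$ so that $(\alpha')^{n}=\alpha^{n_{\mathrm{eff}}}$.

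If that fails, I would present the session bound explicitly conditional on the stated independence assumption. In that case I would verify only the arithmetic, and treat $n_{\mathrm{eff}}$ as a conservative reduction of the exponent. Any exponent at least $96$ still yields the stated $<10^{-118}$ conclusion.
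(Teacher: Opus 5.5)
Your proposal follows the paper's own justification essentially step for step: Theorem~\ref{thm:leakage} at $\beta=1$, $\alpha=0.058$ gives $1-h(0.029)\approx 0.81$ bits, Theorem~\ref{thm:zk} limits what the transcript adds to the single accept/reject bit (ratio $\approx 1.23$), and the per-checkpoint independence assumption yields $\alpha^{120}<10^{-148}$; your arithmetic ($h(0.029)\approx 0.19$, $\log_{10}0.058\approx -1.237$, $n_{\mathrm{eff}}\approx 96.0$, $\alpha^{96}<10^{-118}$) all checks out. Your fallback of treating $n_{\mathrm{eff}}$ as a conservative exponent penalty, conditional on that independence assumption, is exactly what the paper does, and the martingale bound you would try first is the step the paper explicitly defers to future work rather than proving.
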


\begin{figure}[t]
\centering
\begin{tikzpicture}[
    box/.style={draw, rounded corners, minimum width=1.8cm, minimum height=0.6cm, align=center, font=\scriptsize},
    bigbox/.style={draw, rounded corners, thick, inner sep=6pt, align=center},
    arrow/.style={-{Stealth[length=2mm]}, thick},
    dasharrow/.style={-{Stealth[length=2mm]}, thick, dashed},
    lbl/.style={font=\tiny, align=center}
]

\node[box] (keystroke) {Keystroke\\capture};
\node[box, below=4pt of keystroke] (swf) {SWF chain\\computation};
\node[box, below=4pt of swf] (feature) {Feature\\extraction};

\draw[arrow] (keystroke) -- (swf);
\draw[arrow] (swf) -- (feature);

\node[bigbox, fit=(keystroke)(swf)(feature), label={[font=\footnotesize\bfseries]above:Author (Prover)}] (author) {};

\node[box, right=22pt of keystroke] (pedersen) {Pedersen\\commitments};
\node[box, below=4pt of pedersen] (groth) {Groth16\\circuit};
\node[box, below=4pt of groth] (bullet) {Bulletproof\\range proofs};

\node[bigbox, fit=(pedersen)(groth)(bullet), label={[font=\footnotesize\bfseries]above:ZK Proof Generation}] (zkbox) {};

\node[box, right=22pt of pedersen] (proof) {256-byte\\proof $\pi$};
\node[box, below=4pt of proof] (pubinp) {Public inputs\\$d_i, h_{\mathrm{content}}$};
\node[box, below=4pt of pubinp] (checks) {Groth16 verify\\+ range + order};

\node[bigbox, fit=(proof)(pubinp)(checks), label={[font=\footnotesize\bfseries]above:Verifier}] (verifier) {};

\draw[dasharrow] (feature.east) -- (pedersen.west) node[midway, above, lbl] {$\mathbf{f}_i$\\(private)};
\draw[dasharrow] (swf.east) -- (groth.west) node[midway, above, lbl] {$\{s_j\}$\\(private)};
\draw[arrow] (groth.east) -- (proof.west);
\draw[arrow] (bullet.east) -- (checks.west);

\node[lbl, below=6pt of author, text width=2.2cm] {$\mathbf{f}_i$ \textbf{never}\\leaves this box};

\end{tikzpicture}
\caption{ZK-PoP architecture. Behavioral features $\mathbf{f}_i$ remain private; only commitments and zero-knowledge proofs are transmitted to the Verifier.}
\label{fig:architecture}
\end{figure}

\begin{figure}[t]
\centering
\begin{tikzpicture}[
    actor/.style={font=\footnotesize\bfseries},
    msg/.style={-{Stealth[length=2mm]}, thick, font=\tiny},
    note/.style={draw, fill=gray!10, rounded corners, font=\tiny, align=center, inner sep=2pt},
    phase/.style={font=\scriptsize\itshape, align=left}
]

\node[actor] (aHead) at (0.5,0) {Author};
\node[actor] (vHead) at (4.2,0) {Verifier};
\node[actor] (rHead) at (7.5,0) {Relying Party};

\draw[dashed] (0.5,-0.2) -- (0.5,-7.2);
\draw[dashed] (4.2,-0.2) -- (4.2,-7.2);
\draw[dashed] (7.5,-0.2) -- (7.5,-7.2);

\node[phase, anchor=west] at (-1.2,-0.6) {Phase 1: Trusted setup (one-time, MPC ceremony)};
\node[note] (setup) at (2.3,-1.1) {SRS, $vk$, $pk$ \\ published};
\draw[msg] (0.5,-1.1) -- node[above, font=\tiny] {$pk$} (setup.west);
\draw[msg] (setup.east) -- node[above, font=\tiny] {$vk$} (4.2,-1.1);

\node[phase, anchor=west] at (-1.2,-1.85) {Phase 2: Per-checkpoint (every 30~s, $i\!=\!1,\ldots,n$)};

\node[note, anchor=east] (cap)    at (0.3,-2.5) {Capture keystrokes \\ Compute SWF $s_j$ \\ Extract $\mathbf{f}_i$};
\node[note, anchor=east] (commit) at (0.3,-3.3) {Commit: $C_j\!=\!g^{f_j}h^{r_j}$ \\ Bulletproof $\pi_j^{\mathsf{range}}$};
\node[note, anchor=east] (prove)  at (0.3,-4.1) {Build witness $\mathbf{w}$ \\ Groth16 prove $\to\pi_i^{\mathsf{zk}}$ \\ ($\sim$200~ms/checkpoint)};

\draw[msg] (0.5,-4.6) -- node[above, font=\tiny] {$(\pi_i^{\mathsf{zk}}, h_i, R_i, d_i, \pi^{\mathsf{range}})$} (4.2,-4.6);

\node[note, anchor=west] (verify) at (4.4,-5.1) {Verify Groth16 \\ + Bulletproof \\ ($\sim$8.2~ms)};

\node[phase, anchor=west] at (-1.2,-5.85) {Phase 3: Session close};

\draw[msg] (4.2,-6.3) -- node[above, font=\tiny] {accept/reject (1~bit)} (7.5,-6.3);
\node[note, anchor=west] (rp) at (6.6,-6.85) {Aggregate $n$ \\ checkpoint bits};

\node[font=\tiny, align=left, anchor=west] at (-1.2,-7.2) {Verifier learns $\leq n$ bits total. Author keeps $\mathbf{f}_i, s_j, \tau_i, \delta_i$ private.};

\end{tikzpicture}
\caption{End-to-end ZK-PoP lifecycle. The Verifier never receives behavioral features, intermediate document states, or exact timestamps; only proofs, commitments, and aggregate accept/reject bits cross trust boundaries.}
\label{fig:sequence}
\end{figure}

\section{\uppercase{ZK-PoP Construction}}
\label{sec:construction}

\subsection{Overview}

ZK-PoP augments each process attestation checkpoint with a zero-knowledge proof attesting: ``This checkpoint was generated by a process consistent with human authorship, using valid SWF chains, with behavioral features in the normal human range''---without revealing the actual features, exact timings, or intermediate content.

The Author generates a Groth16 proof $\pi_i^{\mathsf{zk}}$ alongside each checkpoint $C_i$, using the raw evidence as the private witness and the checkpoint hash as the public input. The Verifier checks $\pi_i^{\mathsf{zk}}$ against the public inputs, learning only the validity of the statement. Figure~\ref{fig:architecture} illustrates this data flow: behavioral features $\mathbf{f}_i$ never leave the Author's environment; only commitments and proofs are transmitted.

\noindent\textbf{Intuition.} Conceptually, ZK-PoP works like this. During writing, the Author's local environment records keystrokes and computes a stream of behavioral features (e.g., inter-keystroke entropy) plus a chain of memory-hard hashes (the SWF). At each 30-second checkpoint, the local prover takes these private values and emits a 256-byte cryptographic certificate stating: ``the features I observed lie in the human-population range, my hash chain is internally consistent, my edits are temporally ordered, and my content hash is correctly bound to all of the above.'' The certificate is a Groth16 proof: the Verifier can cryptographically check that \emph{such} private values exist, without ever seeing them. If any constraint fails---features outside range, a fabricated chain link, an out-of-order timestamp---the proof simply does not verify. The Verifier learns one bit per checkpoint (accept/reject), which is information-theoretically minimal for any sound attestation scheme (Theorem~\ref{thm:leakage}). Figure~\ref{fig:sequence} traces a full session end-to-end across the three parties.

\subsection{Choice of Building Blocks}
\label{sec:building-blocks}

We justify the selection of each cryptographic primitive.

\noindent\textbf{Groth16 over PLONK / STARKs.} We chose Groth16~\cite{Groth2016} for the main attestation circuit because it minimizes the two costs that dominate deployment: \emph{proof size} (256 bytes vs.\ $\sim$500 B for PLONK and $\sim$50--100 KB for STARKs) and \emph{verifier latency} (3 pairings, $\sim$8\,ms vs.\ $\sim$25\,ms for PLONK). Verifier-side cost is the binding constraint when a relying party (publisher, journal portal, university LMS) processes thousands of submissions per day. Groth16's drawback is its per-circuit trusted setup; we accept this because the circuit is fixed by population parameters $(\boldsymbol{\mu}, \boldsymbol{\sigma})$ that update on a multi-year cadence~\cite{Dhakal2018}, and the setup itself is amortized across the entire user population (Section~\ref{sec:trusted-setup-impl}). When circuits must be revised frequently, PLONK's universal setup becomes the better tradeoff; we discuss this in Section~\ref{sec:discussion}.

\noindent\textbf{Pedersen commitments over hash commitments.} Pedersen~\cite{Pedersen1992} offers two properties hash commitments lack: \emph{perfect (information-theoretic) hiding}---no future cryptanalytic advance can extract $\mathbf{f}_i$ from $C_i$---and \emph{additive homomorphism}, which lets us derive a commitment to $\tau_{i+1} - \tau_i$ from $C_{\tau_i}$ and $C_{\tau_{i+1}}$ without revealing either timestamp (Section~\ref{sec:commitment}). Both matter because behavioral biometrics retain medical sensitivity for the data subject's lifetime; computational hiding is insufficient.

\noindent\textbf{Bulletproofs for range proofs.} Bulletproofs~\cite{Bunz2018} require no trusted setup, fitting deployments that wish to update population bounds out-of-band without re-running a Groth16 ceremony, and aggregate logarithmically: 12 features with 16-bit ranges fit in $\sim$640 B (measured). The alternative---encoding range checks directly in the Groth16 circuit---would inflate the constraint count by $\sim$3{,}200 per feature without a corresponding privacy benefit.

\noindent\textbf{Poseidon for in-circuit Merkle hashing.} Poseidon~\cite{Grassi2021} costs $\sim$250 R1CS constraints per hash versus $\sim$25{,}000 for SHA-256---a 100$\times$ saving that makes the SWF Merkle verification feasible inside a Groth16 circuit. We retain SHA-256 \emph{outside} the circuit for the SWF chain itself (where its sequential cost is the security feature) and bridge to Poseidon via a Merkle tree commitment, exploiting both primitives' independent collision-resistance assumptions.

\noindent\textbf{BN254 over BLS12-381.} BN254 ($\sim$100-bit security) is supported by every major SNARK library and yields the smallest proofs and fastest verification on commodity hardware. For deployments needing $\geq$128-bit security, migration to BLS12-381 is mechanical; no protocol change is needed.

\subsection{Arithmetic Circuit Design}

The ZK-PoP circuit $\mathcal{C}_{\mathsf{PoP}}$ operates over a prime field $\mathbb{F}_p$ and takes the following inputs:

\noindent\textbf{Public inputs} ($\mathbf{x}$):
\begin{itemize}
  \item Checkpoint hash $h_i \in \{0,1\}^{256}$
  \item SWF chain root $R_i \in \{0,1\}^{256}$
  \item Previous checkpoint hash $h_{i-1} \in \{0,1\}^{256}$
  \item Claimed duration $d_i \in \mathbb{N}$
  \item Population parameters $(\boldsymbol{\mu}, \boldsymbol{\sigma}) \in \mathbb{R}^{2m}$
\end{itemize}

\noindent\textbf{Private witness} ($\mathbf{w}$):
\begin{itemize}
  \item Behavioral feature vector $\mathbf{f}_i = (f_1, \ldots, f_m) \in \mathbb{R}^m$
  \item SWF intermediate states $\{s_j\}_{j=0}^{N}$
  \item Timestamps $\tau_i$, randomness $r_i$
  \item Content diff hash $\delta_i$
\end{itemize}

The circuit enforces four constraint systems:

\noindent\textbf{(C1) SWF chain verification.} For a subset of $k$ Merkle-sampled positions, verify that $s_j = H(s_{j-1})$. The sampling function $\mathsf{Sample}(R_i, k)$ selects $k$ indices deterministically via Fiat-Shamir: $j_\ell = H_{\mathsf{P}}(R_i \| \ell) \bmod N$ for $\ell \in [k]$, ensuring verifier-unpredictable but reproducible position selection.
\begin{equation}
\forall j \in \mathsf{Sample}(R_i, k): \quad H_{\mathsf{P}}(s_{j-1}) = s_j
\end{equation}
We use the Poseidon hash~\cite{Grassi2021} for in-circuit Merkle verification ($\sim$250 R1CS constraints vs.\ $\sim$25{,}000 for SHA-256). The prover constructs a Poseidon Merkle tree over the SHA-256 SWF states and proves inclusion of sampled positions. This two-hash architecture is secure under independent collision resistance assumptions: Poseidon ensures Merkle integrity in-circuit, while SHA-256 provides sequential hardness. With $k=2$ samples per checkpoint, per-checkpoint evasion probability for an adversary fabricating fraction $f$ of states is $(1-f)^k$ (e.g., $0.81$ at $f=0.1$, $k=2$). While 81\% per-checkpoint evasion may appear weak, cumulative detection over $n$ checkpoints is $1-(1-f)^{kn}$: at $n=20$, detection reaches 98.5\%; at $n=120$, it exceeds $1-(0.9)^{240} > 1-10^{-11}$. Grinding $R_i$ to avoid sampled positions requires recomputing the full SWF chain (each Argon2id step: 64\,MiB, $t=3$). Per-checkpoint evasion probability remains $(1-f)^k$ (e.g., $0.81$ for $f=0.1$, $k=2$); security derives from compounding over $n$ checkpoints, yielding cumulative detection probability $1-(1-f)^{kn}$.

\noindent\textbf{(C2) Behavioral range verification.} For each feature $f_j$, verify it falls within $3\sigma$ of the population mean:
\begin{equation}
\forall j \in [m]: \quad \mu_j - 3\sigma_j \leq f_j \leq \mu_j + 3\sigma_j
\end{equation}
This captures 99.7\% of the human population distribution. Each range check requires approximately 3,200 R1CS constraints, encompassing fixed-point field encoding, 16-bit decomposition, and bound comparison gadgets.

\noindent\textbf{(C3) Temporal consistency.} Verify that timestamps are monotonically increasing with minimum inter-checkpoint gaps:
\begin{equation}
\forall i > 1: \quad \tau_i - \tau_{i-1} \geq d_{\min}
\end{equation}
where $d_{\min}$ is the minimum checkpoint interval (e.g., 25 seconds for a 30-second target).

\noindent\textbf{(C4) Content binding.} Verify that the content hash chain is valid, where $\mathsf{Commit}(\mathbf{f}_i)$ denotes the Pedersen session commitment from Section~\ref{sec:commitment}:
\begin{equation}
h_i = \mathsf{SHA256}(h_{i-1} \| \delta_i \| \mathsf{Commit}(\mathbf{f}_i))
\end{equation}

\subsection{Circuit Size Analysis}

Table~\ref{tab:circuit-size} reports the constraint count for each sub-circuit.

\begin{table}[ht]
\caption{ZK-PoP arithmetic circuit decomposition.}\label{tab:circuit-size} \centering
\resizebox{\columnwidth}{!}{%
\begin{tabular}{|l|r|}
  \hline
  \textbf{Sub-circuit} & \textbf{R1CS Constraints} \\
  \hline
  SWF Merkle verification ($k\!=\!2$ samples) & 12,847 \\
  Behavioral range proofs ($m\!=\!12$ features) & 38,412 \\
  Temporal consistency & 847 \\
  Content hash binding & 25,153 \\
  \hline
  \textbf{Total (basic)} & \textbf{77,259} \\
  Full behavioral ($m\!=\!24$, incl.\ aggregation) & 115,671 \\
  Extended SWF ($k\!=\!8$, $m\!=\!24$) & 154,212 \\
  \hline
\end{tabular}}
\end{table}

The basic circuit of $\sim$77K constraints is comparable to deployed ZK applications (Table~\ref{tab:circuit-comparison}) and within the $10^6$--$10^8$ constraint range supported by current Groth16 implementations. Figure~\ref{fig:circuit} illustrates how the four constraint sub-circuits interconnect and shows the distribution of constraints across components.

\begin{figure}[t]
\centering
\begin{tikzpicture}[
    circuit/.style={draw, thick, rounded corners, minimum width=2.8cm, minimum height=1.1cm, align=center, font=\scriptsize},
    input/.style={font=\tiny, align=center},
    arrow/.style={-{Stealth[length=2mm]}, thick},
    dasharrow/.style={-{Stealth[length=2mm]}, thick, dashed},
    lbl/.style={font=\tiny, align=center}
]

\node[circuit] (c1) {\textbf{C1: SWF Chain}\\12{,}847 constraints};
\node[circuit, right=10pt of c1] (c2) {\textbf{C2: Behavioral}\\
\textbf{Range}\\38{,}412 constraints};
\node[circuit, below=14pt of c1] (c3) {\textbf{C3: Temporal}\\
\textbf{Order}\\847 constraints};
\node[circuit, below=14pt of c2] (c4) {\textbf{C4: Content}\\
\textbf{Binding}\\25{,}153 constraints};

\node[input, left=14pt of c1] (w1) {$\{s_j\}$};
\node[input, left=14pt of c2, yshift=8pt] (w2) {$\mathbf{f}_i, r_i$};
\node[input, left=14pt of c3] (w3) {$\tau_i$};
\node[input, left=14pt of c4] (w4) {$\delta_i$};

\draw[dasharrow] (w1) -- (c1);
\draw[dasharrow] (w2) -- (c2);
\draw[dasharrow] (w3) -- (c3);
\draw[dasharrow] (w4) -- (c4);

\node[input, above=10pt of c1] (p1) {$R_i$};
\node[input, above=10pt of c2] (p2) {$\boldsymbol{\mu}, \boldsymbol{\sigma}$};

\draw[arrow] (p1) -- (c1);
\draw[arrow] (p2) -- (c2);

\node[input, below=10pt of c3] (p3) {$d_i$};
\node[input, below=10pt of c4] (p4) {$h_i, h_{i-1}$};

\draw[arrow] (p3) -- (c3);
\draw[arrow] (p4) -- (c4);

\draw[arrow, gray] (c1.south) -- (c3.north) node[midway, left, lbl] {chain\\time};
\draw[arrow, gray] (c2.south) -- (c4.north) node[midway, right, lbl] {commit};

\node[draw, thick, rounded corners, below=28pt of $(c3.south)!0.5!(c4.south)$, minimum width=4cm, align=center, font=\scriptsize\bfseries, fill=gray!10] (total) {Total: 77{,}259 R1CS constraints};

\draw[arrow] (c3.south) |- (total.west);
\draw[arrow] (c4.south) |- (total.east);

\draw[arrow] ([xshift=8pt]c2.north east) -- ++(0.5,0) node[right, lbl] {public};
\draw[dasharrow] ([xshift=8pt, yshift=-8pt]c2.north east) -- ++(0.5,0) node[right, lbl] {private};

\end{tikzpicture}
\caption{Arithmetic circuit decomposition of $\mathcal{C}_{\mathsf{PoP}}$. Dashed arrows denote private witness inputs; solid arrows denote public inputs. Constraint counts from Table~\ref{tab:circuit-size}.}
\label{fig:circuit}
\end{figure}

\subsection{Optimizations}

\textbf{Incremental aggregation.} Folding schemes (Nova~\cite{KST2022}, ProtoGalaxy~\cite{EagenGabizon2023}) would replace per-checkpoint proofs with a running IVC accumulator, yielding $O(1)$ proof size regardless of session length. Current benchmarks use per-checkpoint aggregation; integrating folding (Nova's relaxed R1CS) is future work.

\textbf{Batch verification.} Multiple proofs are verified via randomized linear combinations: $b$ proofs cost $3+b$ pairings instead of $3b$ ($2.3\times$ from pairing reduction at $b\!=\!10$; $3.4\times$ measured speedup including MSM amortization benefits beyond pairing reduction).

\textbf{Pre-computed setup.} The Groth16 trusted setup is performed once per population parameter set and amortized across all users, since population keystroke parameters are stable over multi-year timescales~\cite{Dhakal2018}.

\section{\uppercase{Privacy-Preserving Behavioral Commitments}}
\label{sec:commitment}

\subsection{Pedersen Commitment Scheme}

For each behavioral feature $f_j$, the Author computes a Pedersen commitment~\cite{Pedersen1992}:
\begin{equation}
C_j = g^{f_j} \cdot h^{r_j} \in \mathbb{G}
\end{equation}
where $g, h$ are generators of a group $\mathbb{G}$ of prime order $q$, and $r_j \leftarrow \mathbb{Z}_q$ is fresh randomness. The aggregate session commitment is:
\begin{equation}
C_{\mathsf{session}} = \prod_{j=1}^{m} C_j = g^{\sum f_j} \cdot h^{\sum r_j}
\end{equation}

\begin{theorem}[Commitment Security]
\label{thm:commitment}
The Pedersen commitment scheme for behavioral feature vectors is perfectly hiding (information-theoretic) and computationally binding under the discrete logarithm (DL) assumption.
\end{theorem}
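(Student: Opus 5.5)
The plan has two parts: one for perfect hiding and one for computational binding. In each part I first handle a single commitment $C_j = g^{f_j}h^{r_j}$ and then lift the result to the vector commitment $(C_1,\ldots,C_m)$ and to $C_{\mathsf{session}}$. The standing assumptions are the following. $\mathbb{G}$ is cyclic of prime order $q$. Both $g$ and $h$ are generators, and $\log_g h$ is unknown to the committer; in practice $h$ is derived by hash-to-curve so that no one knows this logarithm. The real-valued features $f_j$ are encoded as elements of $\mathbb{Z}_q$ via the fixed-point encoding used in (C2), so "committing to $f_j$" means committing to its field encoding.

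\textbf{Hiding.} Fix any message $f\in\mathbb{Z}_q$ and draw $r\leftarrow\mathbb{Z}_q$ uniformly. Since $h$ generates $\mathbb{G}$, the map $r\mapsto h^r$ is a bijection $\mathbb{Z}_q\to\mathbb{G}$, so $h^r$ is uniform on $\mathbb{G}$. Multiplying by the fixed element $g^f$ is also a bijection, so $C=g^fh^r$ is uniform on $\mathbb{G}$ as well. Its distribution is therefore independent of $f$, and even a computationally unbounded adversary has advantage exactly $0$.

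For the vector, the $r_j$ are drawn independently. Hence $(C_1,\ldots,C_m)$ is uniform on $\mathbb{G}^m$ whatever $\mathbf{f}$ is. The product $C_{\mathsf{session}}$ is a function of this tuple, so it too is independent of $\mathbf{f}$; in fact it is uniform, because $\sum r_j$ is uniform.

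\textbf{Binding.} I reduce to the DL assumption. Suppose a PPT adversary $\mathcal{A}$, given $(g,h)$, outputs two openings $(f,r)\neq(f',r')$ with $g^fh^r=g^{f'}h^{r'}$. I build a reduction $\mathcal{B}$ that receives a DL challenge $h=g^x$ and passes $(g,h)$ to $\mathcal{A}$. From $\mathcal{A}$'s collision we get $g^{f-f'}=h^{r'-r}$. The key step is to note that $r\neq r'$: if $r=r'$, this equation forces $f=f'$, contradicting distinctness of the openings. Therefore $x=(f-f')(r'-r)^{-1}\bmod q$, which exists because $q$ is prime. So $\mathcal{B}$ solves DL with the same advantage as $\mathcal{A}$, which must be negligible.

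For vector binding, any double opening of $(C_1,\ldots,C_m)$ differs in some coordinate $j$, and that coordinate yields a single-commitment collision, with no loss in advantage. This is the step I expect to need a caveat. $C_{\mathsf{session}}$ alone binds only $\sum_j f_j$ and not the individual $f_j$: for example, $(f_1+1,f_2-1)$ opens to the same product. So the statement should be read as binding for the tuple $(C_j)_j$, which is what the circuit actually consumes in (C4). For $C_{\mathsf{session}}$ itself, binding holds only for the aggregate opening $(\sum f_j,\sum r_j)$, and the proof will state this explicitly.

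A second subtlety is the gap between real-valued features and $\mathbb{Z}_q$. I handle it by noting that the fixed-point encoding is injective on the bounded ranges enforced by (C2), with $2^{16}\ll q$. A binding collision on the encodings is therefore exactly a collision on the features.
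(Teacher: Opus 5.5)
Your proof is correct and follows the same route as the paper: perfect hiding because $r\mapsto g^{f}h^{r}$ is a bijection onto $\mathbb{G}$, and binding by extracting $\log_g h=(f-f')(r'-r)^{-1}$ from a double opening, which the paper states more tersely and only for a single coordinate. Your observation that $C_{\mathsf{session}}$ binds only the aggregate $(\sum_j f_j,\sum_j r_j)$ is a correct refinement the paper omits, but your claim that (C4) consumes the tuple $(C_j)_j$ is wrong: (C4) explicitly hashes the \emph{session} commitment, so this caveat applies to the construction as written.
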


\begin{proof}
\emph{Hiding (perfect):} For any $f_j$, $C_j = g^{f_j} h^{r_j}$ is uniformly distributed over $\mathbb{G}$ since for any $C \in \mathbb{G}$ there exists exactly one $r_j$ such that $C = g^{f_j} h^{r_j}$. No computational assumption is needed.
\emph{Binding (computational):} Opening to $(f_j, r_j) \neq (f_j', r_j')$ with equal commitments yields $\log_g h$, contradicting DL hardness.
\end{proof}

\subsection{Range Proofs via Bulletproofs}

For each committed feature $C_j$, the Author produces a Bulletproof~\cite{Bunz2018} demonstrating $f_j \in [a_j, b_j]$ where $a_j = \mu_j - 3\sigma_j$ and $b_j = \mu_j + 3\sigma_j$ are public population bounds. The proof is:
\begin{equation}
\pi_j^{\mathsf{range}} = \mathsf{BulletproofProve}(C_j, f_j, r_j, a_j, b_j)
\end{equation}

For $m$ features with $n$-bit ranges, the aggregated Bulletproof has size $O(m \cdot \log n)$ group elements. For $m = 12$ features with 16-bit ranges, this yields approximately 640 bytes (implementation-measured).

\subsection{Temporal Ordering Proofs}

We prove that checkpoint timestamps are correctly ordered without revealing exact values. For consecutive commitments $C_{\tau_i} = g^{\tau_i} h^{r_i}$ and $C_{\tau_{i+1}} = g^{\tau_{i+1}} h^{r_{i+1}}$, we prove $\tau_{i+1} - \tau_i \geq d_{\min}$ by exploiting the homomorphic property of Pedersen commitments. The prover computes:
\begin{equation}
C_{\Delta_i} = C_{\tau_{i+1}} / C_{\tau_i} = g^{\Delta_i} \cdot h^{r_{i+1} - r_i}
\end{equation}
where $\Delta_i = \tau_{i+1} - \tau_i$. This is a valid Pedersen commitment to $\Delta_i$ with randomness $r_{i+1} - r_i$ (known to the prover). The prover then provides a Bulletproof range proof that $\Delta_i \in [d_{\min}, d_{\max}]$, where $d_{\max}$ is the maximum allowed inter-checkpoint gap (e.g., 120~seconds, preventing long unattested pauses). No additional randomness is needed; the homomorphic property enables this without revealing individual timestamps.

\section{\uppercase{Privacy Analysis}}
\label{sec:privacy-analysis}

\noindent The privacy architecture operates at two layers: zero-knowledge proofs protect individual sessions (the verifier learns nothing beyond accept/reject), while differential privacy (Sect.~\ref{sec:dp}) protects individuals within population-level aggregate statistics.

\subsection{Zero-Knowledge Property}

\begin{theorem}[Zero-Knowledge]
\label{thm:zk}
ZK-PoP is computationally zero-knowledge under the Discrete Logarithm (DL) assumption in the random oracle model. Specifically: the Groth16 component is perfectly zero-knowledge given the simulation trapdoor, the Pedersen commitments are perfectly hiding, and the Bulletproof range proofs are computationally zero-knowledge via the Fiat-Shamir transform in the random oracle model.
\end{theorem}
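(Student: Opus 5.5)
The proof will be a simulation argument organized as a hybrid over the three components of a ZK-PoP transcript. For each checkpoint the verifier sees the public inputs $(h_i, R_i, h_{i-1}, d_i, \boldsymbol{\mu}, \boldsymbol{\sigma})$, the Groth16 proof $\pi_i^{\mathsf{zk}}$, the Pedersen commitments $\{C_j\}$ and $C_{\tau_i}$, and the aggregated Bulletproofs $\pi^{\mathsf{range}}$ for features and timestamp differences. I will build a simulator $\mathcal{S}$ that gets only the public inputs and the accept bit, together with the Groth16 simulation trapdoor $\tau$ (from the CRS) and programmable access to the random oracle. I will then argue that its output is computationally indistinguishable from a real transcript.

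\textbf{Step 1: commitments.} $\mathcal{S}$ samples every $C_j$ and $C_{\tau_i}$ uniformly from $\mathbb{G}$. By Theorem~\ref{thm:commitment} these are identically distributed to honest commitments, since they are perfectly hiding for fresh $r_j$. The temporal difference commitments $C_{\Delta_i}=C_{\tau_{i+1}}/C_{\tau_i}$ are deterministic functions of these, so they need no separate treatment.

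\textbf{Step 2: Groth16.} Given the commitments and public inputs, $\mathcal{S}$ uses $\tau$ to output $(A,B,C)$ by the standard Groth16 simulator: it picks $A,B$ uniformly and solves the verification equation for $C$. Groth16 is perfectly zero-knowledge, so this hybrid is identical to the previous one.

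\textbf{Step 3: Bulletproofs.} $\mathcal{S}$ runs the honest-verifier ZK simulator of the Bulletproof protocol and programs the random oracle at the Fiat–Shamir points so the challenges match. This hybrid differs from the real one only if the adversary queried a programmed point beforehand. That happens with probability at most $q_H/|\mathbb{F}_q|$, which is negligible in $\lambda$. Chaining the three hybrids and summing over $n$ checkpoints bounds the total distinguishing advantage by $n\cdot\negl(\lambda)$. This yields computational ZK in the ROM, with DL used only to keep commitments binding, so that simulated and real statements coincide.

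\textbf{Main obstacle: consistency of shared values.} The same commitment ($\mathsf{Commit}(\mathbf{f}_i)$ inside $h_i$ via C4, and $C_{\tau_i}$ used across checkpoints $i$ and $i+1$) appears in both the Groth16 statement and the Bulletproof statement. The simulators therefore cannot be run independently. I would handle this by fixing the commitments first, in Step~1, and treating them as common public input to both simulators. This works because Groth16 simulation holds for any instance and Bulletproof HVZK holds for any committed value.

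A second subtlety is that the SHA-256 binding in C4 makes $h_i$ a deterministic function of $\delta_i$ and the commitment. I would treat $h_i$ as part of the given public input rather than something the simulator recomputes. Content privacy then reduces to the fact that the verifier only ever holds $h_i$, which is exactly what the statement promises.
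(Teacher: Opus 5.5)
Your proposal is essentially the paper's own proof: the same hybrid over Pedersen perfect hiding, the trapdoor Groth16 simulator run with the simulated commitments as public inputs, and Fiat--Shamir simulation of the Bulletproofs by programming the random oracle. Your explicit programming bound, the union over $n$ checkpoints, and the bookkeeping for shared commitments are refinements rather than a different route.

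One caveat applies equally to the paper. If $\mathcal{S}$ samples the $C_j$ itself while receiving $h_i$ as given, the simulated $C_{\mathsf{session}}$ is independent of $h_i$, whereas a real transcript satisfies $h_i=\mathsf{SHA256}(h_{i-1}\|\delta_i\|C_{\mathsf{session}})$. It is therefore cleaner to put the commitments into the instance and use perfect hiding only to show that the instance is distributed independently of $\mathbf{f}_i$. Relatedly, DL plays no role in the simulation itself, since every step is perfect or statistical in the ROM.
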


\begin{proof}
We construct simulator $\mathcal{S}$ producing indistinguishable transcripts without the witness.
\emph{Step 1:} $\mathcal{S}$ invokes the Groth16 simulator~\cite{Groth2016} using the simulation trapdoor for $\mathcal{C}_{\mathsf{PoP}}$; by the perfect zero-knowledge property of Groth16, $\tilde{\pi}$ is identically distributed to a real proof.
\emph{Step 2:} $\mathcal{S}$ generates Pedersen commitments to $0$: $\hat{C}_j = 0 \cdot G + \hat{r}_j \cdot H$ for fresh randomness $\hat{r}_j$. By the perfect hiding property of Pedersen commitments, $\hat{C}_j$ is identically distributed to commitments to any $f_j$. The Groth16 simulator, given the simulation trapdoor, can produce valid-looking proofs for arbitrary public inputs, including the simulated commitments.
\emph{Step 3:} $\mathcal{S}$ simulates Bulletproof range proofs~\cite{Bunz2018}; by zero-knowledge in the random oracle model (via Fiat-Shamir transform), these are indistinguishable.
A hybrid argument across the three steps completes the proof. The overall simulation advantage is bounded by $3 \cdot \negl(\lambda)$ via a standard hybrid argument across the three components.
\end{proof}

\subsection{Differential Privacy for Aggregate Statistics}
\label{sec:dp}

When population-level statistics must be released for calibration (e.g., updating the population parameters $\boldsymbol{\mu}, \boldsymbol{\sigma}$), we apply the Gaussian mechanism~\cite{Dwork2006}.

\begin{theorem}[Aggregate Privacy]
\label{thm:dp}
Releasing population parameters $\hat{\mu}_j, \hat{\sigma}_j$ computed from $N$ users' attestation data with additive Gaussian noise $\mathcal{N}(0, \sigma_{\mathsf{noise}}^2)$ where
\begin{equation}
\sigma_{\mathsf{noise}} \geq \frac{\Delta_f \sqrt{2 \ln(1.25/\delta)}}{\varepsilon}
\end{equation}
achieves $(\varepsilon, \delta)$-differential privacy. Here $\Delta_f = (b_j - a_j)/N$ is the sensitivity of the mean estimator.
\end{theorem}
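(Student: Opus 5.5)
The plan is to reduce Theorem~\ref{thm:dp} to the standard Gaussian mechanism theorem of Dwork and Roth~\cite{Dwork2014}. That theorem says: if a real-valued query has $\ell_2$-sensitivity $\Delta_f$, then adding $\mathcal{N}(0,\sigma_{\mathsf{noise}}^2)$ noise with $\sigma_{\mathsf{noise}} \geq \Delta_f\sqrt{2\ln(1.25/\delta)}/\varepsilon$ is $(\varepsilon,\delta)$-differentially private for $\varepsilon \in (0,1)$. The only genuine work is to establish the sensitivity bound for the released statistic. I will use the add/remove-free neighbouring relation: two datasets of $N$ users are neighbours if they differ in exactly one user's record.

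\emph{Step 1 (clipping).} I will first make explicit that each user's contribution to feature $j$ is clamped to the public interval $[a_j,b_j]$ before aggregation. This is consistent with constraint (C2) and the Bulletproof range proofs of Section~\ref{sec:commitment}, which already guarantee $f_j \in [a_j,b_j]$ for every accepted session.

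\emph{Step 2 (sensitivity of the mean).} Let $\hat\mu_j = \frac1N\sum_u f_j^{(u)}$. Replacing one user changes a single summand by at most $b_j-a_j$, so $|\hat\mu_j - \hat\mu_j'| \leq (b_j-a_j)/N = \Delta_f$. The Gaussian mechanism bound then gives $(\varepsilon,\delta)$-DP for $\hat\mu_j$ directly.

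\emph{Step 3 (the $\hat\sigma_j$ component).} This is the main obstacle. The statement quotes only the sensitivity of the mean, yet it claims privacy for $\hat\sigma_j$ as well. My first approach is to release $\hat\sigma_j$ through the second moment $\hat\nu_j = \frac1N\sum_u (f_j^{(u)})^2$. On clipped data, shifted so that features lie in $[0,b_j-a_j]$, this has sensitivity at most $(b_j-a_j)^2/N$. I will then derive $\hat\sigma_j^2 = \hat\nu_j - \hat\mu_j^2$ as post-processing, which preserves DP. There is a wrinkle: with the stated noise scale, the second-moment query needs its own calibration with the larger $\Delta_f$. I will therefore phrase the theorem as applying to each scalar release with that release's own sensitivity, and combine the $\hat\mu_j$ and $\hat\nu_j$ releases (and all $m$ features) via basic composition. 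This yields $(2m\varepsilon, 2m\delta)$ overall. Alternatively, it yields a tighter bound by treating the whole vector as one query with $\ell_2$-sensitivity $\sqrt{\sum_j \Delta_{f,j}^2}$. I will state clearly which reading of ``achieves $(\varepsilon,\delta)$-DP'' is meant, namely per released coordinate, with composition giving the session total.

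\emph{Step 4 (conditions).} Finally, I will note the standard caveat that the $\sqrt{2\ln(1.25/\delta)}$ constant requires $\varepsilon<1$. For the paper's operating points with $\varepsilon \geq 1.0$, I would cite the analytic Gaussian mechanism, or simply restrict the claim, so that the stated bound remains valid.
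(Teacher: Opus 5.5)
Your Steps 1--2 are the paper's entire proof. The paper bounds the replace-one sensitivity of $\hat\mu_j$ by $(b_j-a_j)/N$, tacitly assuming each $f_{ij}\in[a_j,b_j]$, which you rightly make explicit by clipping to a fixed public interval. It then invokes Theorem A.1 of Dwork and Roth.

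The paper stops there, so your Steps 3--4 are not a different route but a repair of what its argument omits. Your concern in Step 3 is justified. If all $N$ users sit at $a_j$ and one moves to $b_j$, the standard deviation jumps from $0$ to $(b_j-a_j)\sqrt{N-1}/N \approx (b_j-a_j)/\sqrt{N}$. That is a factor $\sqrt{N-1}$ above $\Delta_f$, so noise calibrated to the mean's sensitivity cannot protect $\hat\sigma_j$. Your fix is the right one: release the second moment with sensitivity $(b_j-a_j)^2/N$, obtain $\hat\sigma_j$ by post-processing, and compose over the $2m$ releases (or use the joint $\ell_2$ sensitivity). The price is restating the theorem per release.

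One refinement to Step 4. Theorem A.1 is stated for $\varepsilon\in(0,1)$, but $\varepsilon=1$ still follows by monotonicity in $\varepsilon$ and continuity in $\sigma_{\mathsf{noise}}$. The same argument covers equality in the noise bound, so the paper's $\varepsilon=1.0$ point is covered. For substantially larger $\varepsilon$, however, the classical calibration genuinely fails. Citing the analytic Gaussian mechanism would therefore replace the stated formula rather than validate it, and restricting the claim is the honest option.
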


\begin{proof}
Standard application of the Gaussian mechanism~\cite{Dwork2014}. The sensitivity of the sample mean $\hat{\mu}_j = \frac{1}{N}\sum_{i=1}^{N} f_{ij}$ is $\Delta_f = (b_j - a_j)/N$ since changing one user's feature shifts the mean by at most $(b_j-a_j)/N$. The Gaussian mechanism with the stated noise level satisfies $(\varepsilon, \delta)$-DP by Theorem~A.1 of Dwork and Roth~\cite{Dwork2014}.
\end{proof}

\noindent\textbf{Utility analysis.} For $N\!=\!10{,}000$, $\varepsilon\!=\!1.0$, $\delta\!=\!10^{-5}$, the noise $\sigma_{\mathsf{noise}} \approx 0.21$~ms is negligible versus inter-individual variation ($\sim$100~ms~\cite{Dhakal2018}); accuracy degrades by ${<}0.3\%$.

\subsection{Unlinkability}

\begin{theorem}[Unlinkability]
\label{thm:unlinkability}
Under independent, uniformly random session nonces, given two attestation sessions producing proofs $(\pi_1, \mathbf{C}_1)$ and $(\pi_2, \mathbf{C}_2)$, no PPT adversary can determine whether they belong to the same author with advantage greater than $\negl(\lambda)$.
\end{theorem}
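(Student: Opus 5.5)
The plan is to reduce unlinkability to the zero-knowledge property (Theorem~\ref{thm:zk}) and the perfect hiding of Pedersen commitments (Theorem~\ref{thm:commitment}) through a hybrid argument. The argument is phrased as a distinguishing game. The challenger fixes two authors $A, B$ with accepting witnesses $\mathbf{w}_A, \mathbf{w}_B$, and $\mathbf{w}'_A$ for a second session of $A$. It outputs either the pair of transcripts from $(\mathbf{w}_A, \mathbf{w}'_A)$ ("same author") or from $(\mathbf{w}_A, \mathbf{w}_B)$ ("different authors"). The adversary's advantage is its distinguishing advantage between these two worlds.

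The first step is to pin down exactly what each session transcript contains: the Groth16 proofs $\pi_i^{\mathsf{zk}}$, the public inputs $(h_i, R_i, h_{i-1}, d_i, \boldsymbol{\mu}, \boldsymbol{\sigma})$, the commitment vector $\mathbf{C}$, and the Bulletproofs. The population parameters and the duration $d_i$ do not depend on the author. I therefore restrict the claim to sessions with equal claimed durations, since otherwise $d_i$ itself is a linking signal. The remaining public values $h_i$ and $R_i$ depend on the session nonce, because the SWF chain is seeded from it and $h_0$ is bound to it. The second step argues that, under independent uniform nonces, $h_i$ and $R_i$ are outputs of SHA-256/Argon2id on inputs containing fresh high-entropy randomness. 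In the random oracle model they are therefore uniform and independent of the author's identity, except with probability $\negl(\lambda)$ of an oracle collision or a query to the fresh nonce.

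The hybrids then run as follows. $\mathsf{H}_0$ is the real "same author" world. In $\mathsf{H}_1$, every commitment in both sessions is replaced by a commitment to $0$ with fresh randomness; this is identical by perfect hiding, and the randomizers $r_j$ are fresh per session. In $\mathsf{H}_2$, the Groth16 proofs and Bulletproofs are replaced by the simulator $\mathcal{S}$ of Theorem~\ref{thm:zk} run on the public inputs; the loss is $\negl(\lambda)$ per session. In $\mathsf{H}_3$, the public hashes $h_i, R_i$ are replaced by uniform strings, with loss $\negl(\lambda)$ in the ROM by the previous step. $\mathsf{H}_3$ no longer depends on any witness. The same chain of hybrids applied backwards reaches the "different authors" world, so the total advantage is at most $2(3\negl(\lambda) + \negl(\lambda)) = \negl(\lambda)$.

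I expect the main obstacle to be the step about the public inputs rather than the ZK step. The content hash chain $h_i = \mathsf{SHA256}(h_{i-1}\|\delta_i\|\mathsf{Commit}(\mathbf{f}_i))$ contains author-dependent material, and the final content hash is revealed by design. Two sessions on the same document are therefore trivially linkable through that hash, so the theorem must implicitly quantify over sessions whose revealed content hashes do not themselves link the authors. I would handle this by first showing, via the random oracle, that the hash chain is uniform whenever the nonce-derived $h_0$ (or the commitment randomness) is fresh. I would then state explicitly that linkage through the content itself, or through the $\leq n$ accept bits and $n$, lies outside the definition, in line with Theorem~\ref{thm:leakage}.
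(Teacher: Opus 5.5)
Your route is the paper's own, made explicit. The paper's proof is a short sketch: it cites Theorem~\ref{thm:zk}, perfect hiding, and fresh nonces that make $(R_i,h_i,d_i)$ session-specific, and it leaves $d_i$ as a residual leak of up to $\log_2(T_{\max}/r)$ bits. Your restriction to equal claimed durations is the cleaner choice, since a leak of that size generally gives non-negligible advantage and so does not yield a $\negl(\lambda)$ bound. Two steps of your hybrid chain, however, fail as written. First, $\mathsf{H}_0\to\mathsf{H}_1$ is not an identity. Perfect hiding only makes the \emph{marginal} of $C_j$ independent of $f_j$. In $\mathsf{H}_1$ the Bulletproofs and Groth16 proofs are still honest, i.e.\ computed from the opening $(f_j,r_j)$ and tied to the specific $C_j$, which also feeds $h_i$ through C4. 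After you swap in a commitment to $0$, you either keep proofs that no longer verify or need honest proofs for a commitment with no in-range opening. Simulate all proofs first, so that the transcript depends on each $C_j$ only as a group element, and only then invoke hiding.

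Second, the random-oracle step for $h_i$ does not go through, because the fresh randomness you rely on is public. $h_0$ is the public input $h_{i-1}$ at $i=1$; you list $h_{i-1}$ among the public inputs yourself. The commitments are in $\mathbf{C}$. So the only hidden part of the oracle input $h_{i-1}\|\delta_i\|\mathsf{Commit}(\mathbf{f}_i)$ is $\delta_i$. Suppose the content diffs are known or guessable to the adversary; this is automatic if it chooses the witnesses, and holds e.g.\ for an empty diff at an idle checkpoint. Then it recomputes $h_i$ with one query. That separates $\mathsf{H}_3$ from $\mathsf{H}_2$, and if $\mathbf{w}'_A$ and $\mathbf{w}_B$ carry different content it separates the two worlds outright. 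The fix mirrors your treatment of $d_i$: let the two candidate second sessions share $n$, the $d_i$, and the $\delta_i$, differing only in $\mathbf{f}$, $\tau$, the nonce and the commitment randomness. After simulation and hiding, each $h_i$ is then a fixed function of the nonce-derived $h_0$, the common $\delta_i$, and a uniform commitment. It is therefore identically distributed in both worlds, with no uniformity claim needed. Only $R_i$ still relies on your ROM argument, which is fine there provided the SWF seed stays in the witness.
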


\begin{proof}
By Theorem~\ref{thm:zk}, proofs reveal nothing beyond validity. Each session is initialized with a fresh random seed $s_0 \gets \{0,1\}^\lambda$, independent of the author's identity. Public inputs $(R_i, h_i, d_i)$ are session-specific: $R_i$ derives from fresh SWF chain randomness, $h_i$ incorporates the session nonce, and $d_i$ is the claimed duration. Under the assumption that session nonces are uniformly random and independent, the public inputs across sessions are statistically independent. The commitments use independent randomness and are independently uniformly distributed by the hiding property (Theorem~\ref{thm:commitment}). Note that the public input $d_i$ (claimed duration) provides a weak auxiliary signal; under the assumption that session durations vary sufficiently across the population, this leakage is bounded by $\log_2(T_{\max}/r)$ bits where $r$ is the duration resolution.
\end{proof}

\subsection{Comparison with Quantization-Based Privacy}

The baseline process attestation approach proposes evidence quantization ($Q_r(t) = \lfloor t/r \rfloor \cdot r$) for privacy. Table~\ref{tab:privacy-comparison} compares the two approaches.

\begin{table}[ht]
\caption{Privacy comparison: ZK-PoP vs. evidence quantization.}\label{tab:privacy-comparison} \centering
\resizebox{\columnwidth}{!}{%
\begin{tabular}{|l|c|c|}
  \hline
  \textbf{Property} & \textbf{Quantization} & \textbf{ZK-PoP} \\
  \hline
  Information leaked & $\log_2(T_{\max}/r)$ bits & 1 bit \\
  Behavioral privacy & Bounded & Full \\
  Temporal privacy & Partial & Full \\
  Content privacy & Partial (hashes only) & Near-full$^*$ \\
  Unlinkability & Partial ($\varepsilon$-DP) & Full (negl.) \\
  Verification cost & O(1) & O(1) (pairing) \\
  Proof size overhead & 0 & 256 bytes \\
  Prover cost & Negligible & 6--94s \\
  \hline
\end{tabular}}

\smallskip\noindent{\footnotesize $^*$Retype defense requires post-verification edit statistics disclosure.}
\end{table}

ZK-PoP provides strictly stronger privacy at the cost of increased prover computation (shown practical in Section~\ref{sec:evaluation}).

\section{\uppercase{Security Analysis}}
\label{sec:security}

\subsection{Soundness}

\begin{theorem}[Soundness]
\label{thm:soundness}
Under the $q$-power knowledge of exponent ($q$-PKE) assumption in the algebraic group model (AGM), ZK-PoP is computationally sound: no PPT adversary can produce a valid proof for a process inconsistent with human authorship (i.e., with behavioral features outside the population range, invalid SWF chains, or inconsistent content binding) except with probability $\negl(\lambda)$.
\end{theorem}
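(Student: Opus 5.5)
The argument is a reduction. Suppose a PPT adversary $\mathcal{A}$ outputs a public input $\mathbf{x}=(h_i,R_i,h_{i-1},d_i,\boldsymbol{\mu},\boldsymbol{\sigma})$ and a proof $\pi_i^{\mathsf{zk}}$ that verifies. I will convert $\mathcal{A}$ into an extractor that recovers a witness $\mathbf{w}=(\mathbf{f}_i,\{s_j\},\tau_i,r_i,\delta_i)$ satisfying $\mathcal{C}_{\mathsf{PoP}}$, or else into an algorithm breaking one of the underlying assumptions. The key tool is knowledge soundness of Groth16 in the AGM under $q$-PKE~\cite{Groth2016}. Since $\mathcal{A}$ is algebraic, the group elements $(A,B,C)$ of its proof come with representations in terms of the CRS elements. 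The $q$-PKE / AGM argument then shows that, except with negligible probability, these representations encode a satisfying assignment to the R1CS system of $\mathcal{C}_{\mathsf{PoP}}$. That is the first step, and I take it as a black box.

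The second step splits the event ``the process is inconsistent with human authorship'' into the three cases named in the statement and shows each is incompatible with the extracted witness.

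(i) Behavioral features outside range. A satisfying assignment satisfies (C2) for all $j\in[m]$, so it suffices to argue that the fixed-point encoding and 16-bit decomposition gadgets are sound. In particular, the field encoding must not wrap modulo $p$, which holds because the bounds fit in 16 bits and are far below $p$. The same holds for the out-of-circuit Bulletproof on $C_j$, using its soundness under DL. Binding of the Pedersen commitment (Theorem~\ref{thm:commitment}) ensures the Bulletproof and the circuit refer to the same $f_j$: opening $\mathsf{Commit}(\mathbf{f}_i)$ in (C4) to two different vectors would yield $\log_g h$.

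(ii) Invalid SWF chain. The extracted witness satisfies $H_{\mathsf{P}}(s_{j-1})=s_j$ only at the sampled positions, together with Poseidon Merkle inclusion against $R_i$. Any alternative opening of the tree gives a Poseidon collision. So soundness for this case is a statement about the committed sampled links, and the bound on fabricated fractions $f$ is the sampling probability $(1-f)^k$ per checkpoint, compounded to $(1-f)^{kn}$, as in (C1). Grinding $R_i$ is charged to the sequential hardness assumption on Argon2id/SHA-256.

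(iii) Inconsistent content binding. (C4) together with collision resistance of SHA-256 forces $h_i$ to bind $h_{i-1}$, $\delta_i$ and the commitment, and (C3) forces temporal ordering.

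The final step sums the failure probabilities with a union bound: the AGM extraction failure, the DL advantage for Pedersen binding and Bulletproof soundness, Poseidon and SHA-256 collisions, and the SWF sampling term.

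I expect case (ii) to be the main obstacle. Groth16 knowledge soundness only certifies the \emph{sampled} links, so the unsampled part of the chain is not negligible-probability sound per checkpoint: the evasion probability is $0.81$ at $f=0.1$, $k=2$. I would therefore state soundness for C1 relative to the cumulative bound $(1-f)^{kn}$ and argue that this is negligible once $kn=\Omega(\lambda)$. I would also have to make precise that Fiat-Shamir sampling from $R_i$ cannot be biased without recomputing the chain. Modeling $H_{\mathsf{P}}$ as a random oracle for this step is the first thing I would try.
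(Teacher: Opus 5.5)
Your skeleton is the paper's: Groth16 knowledge extraction under $q$-PKE in the AGM, then a case split over (C2), (C1), (C4), closed by collision resistance. You also correctly spot what the paper's proof passes over. (C1) only certifies the $k$ sampled links, and the paper's step ``falsified chain states require either a Poseidon collision or satisfying the Poseidon constraint with incorrect inputs'' is valid only at sampled positions.

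Your repair of case (ii), however, fails at exactly the step you flag. Fiat--Shamir sampling from $R_i$ \emph{can} be biased without recomputing the chain, contrary to the claim you plan to formalize (which the paper also makes in its description of (C1)). The fabricated states are unconstrained. The adversary perturbs one fabricated leaf, say inside a contiguous fabricated suffix, which creates no new broken links. It then recomputes that leaf's Merkle path with $O(\log N)$ Poseidon calls and obtains a fresh $R_i$ and fresh sample positions, with no SHA-256 or Argon2id work at all.

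Modeling $H_{\mathsf{P}}$ as a random oracle gives the true picture. With $Q$ candidate roots, the best per-checkpoint evasion probability is at least $1-(1-(1-f)^k)^Q$ (this regrinding attack achieves it) and at most $Q(1-f)^k$. For constant $k$ it is therefore close to $1$ after roughly $(1-f)^{-k}$ regrinds, which is only $10^4$ even at $f=0.99$. Since each checkpoint is ground separately against its own $R_i$, the compounded bound $(1-f)^{kn}$ never applies: polynomially many cheap regrinds per checkpoint make the whole session pass with probability close to $1$.

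So the ``invalid SWF chain'' clause is not merely unproven. It admits a concrete attack for $k=2$ with challenges derived from $R_i$. What extraction actually yields, in your argument and in the paper's, is soundness for the circuit relation, i.e.\ correctness of the sampled links. A genuine SWF guarantee needs one of the following:
\begin{itemize}
\item $k\log_2\frac{1}{1-f} \geq \lambda+\log_2 Q$ per challenge;
\item a single challenge derived only after all $n$ roots are fixed, so that $kn$ genuinely sits in the exponent (still with the factor $Q$);
\item challenge randomness the prover cannot grind, such as verifier-chosen or beacon randomness.
\end{itemize}

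There are two smaller problems. First, in case (i) your Pedersen linkage is wrong as stated. $\mathsf{Commit}(\mathbf{f}_i)$ in (C4) is the session commitment $\prod_j C_j = g^{\sum_j f_j}h^{\sum_j r_j}$, which binds only $\sum_j f_j$. Two vectors with equal coordinate sums (and equal $\sum_j r_j$) give the same commitment without revealing $\log_g h$. You would need independent generators per coordinate, or all $C_j$ hashed into $h_i$. The step is also unnecessary, because (C2) is enforced in-circuit on the extracted witness, and that is all the paper uses.

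Second, in case (ii) the extracted relation certifies Poseidon links $H_{\mathsf{P}}(s_{j-1})=s_j$. The sequential-hardness assumption you charge grinding to concerns the out-of-circuit SHA-256/Argon2id chain. Unless the in-circuit link check is the actual SWF step function, the witness is not tied to the memory-hard computation, and no sampling argument can recover that connection.
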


\begin{proof}
We reduce to the soundness of Groth16 and the collision resistance of SHA-256 and Poseidon.

Suppose adversary $\mathcal{A}$ produces a valid proof $\pi^*$ for public input $\mathbf{x}^*$ with no valid witness. By the knowledge soundness of Groth16~\cite{Groth2016} (under the $q$-power knowledge of exponent assumption), there exists an extractor $\mathcal{E}$ that extracts a witness $\mathbf{w}^*$ satisfying the circuit constraints. If $\mathbf{w}^*$ contains behavioral features outside $[\mu_j - 3\sigma_j, \mu_j + 3\sigma_j]$, constraint (C2) is violated---contradiction. The reduction for SWF chain integrity proceeds in two stages. First, by collision resistance of the Poseidon Merkle tree, the knowledge extractor recovers the committed SWF states $\{s_j\}$ faithfully. Second, constraint C1 verifies $\mathsf{Poseidon}(s_{j-1}) = s_j$ for sampled Merkle positions within the R1CS circuit (using the Poseidon hash gadget for in-circuit Merkle verification), so falsified chain states require either a Poseidon collision or satisfying the Poseidon constraint with incorrect inputs---both negligible events. If $\mathbf{w}^*$ violates content binding, constraint (C4) requires a SHA-256 collision---probability $\negl(\lambda)$. The reduction incurs a concrete security loss of $O(Q)$ where $Q$ is the number of algebraic group operations.
\end{proof}

\noindent\textbf{Soundness scope.} Theorem~\ref{thm:soundness} ensures no adversary can produce valid proofs violating any \emph{individual} constraint, but the circuit does not enforce joint distributional properties (e.g., feature correlations). The residual false acceptance rate is bounded by marginal constraint discrimination: with $m\!=\!12$ features and $3\sigma$ bounds, random satisfaction is $\sim$5.8\% per checkpoint (Table~\ref{tab:sensitivity}), compounding to $<10^{-120}$ over 120 checkpoints. Against white-box adaptive attacks---distribution-matched, cross-writer replay, and Markov-chain replay on KLiCKe~\cite{Tian2025} ($N\!=\!500$ writers sampled for adversarial evaluation)---the EER rises to 32.9\%, compounding to $0.329^{120} < 10^{-58}$.  Under balanced-class evaluation ($500$:$500$ genuine:forged), EER $= 34.3\%$, confirming that class imbalance does not inflate discrimination. Accounting for temporal autocorrelation ($r_1\!=\!0.111$, $n_{\text{eff}} \approx 96$), the conservative bound is $0.329^{96} < 10^{-46}$---far beyond practical security requirements. Joint distributional tests (e.g., Mahalanobis distance) could tighten per-checkpoint discrimination at $\sim$2$\times$ circuit cost; the current design prioritizes simplicity and inclusivity.

\subsection{Resistance to Forgery Attacks}

\textbf{AI-generated keystroke forgery.} An adversary must satisfy all constraints (C1--C4) simultaneously. The SWF chain requires sequential Argon2id computation; memory-hardness~\cite{Biryukov2016} prevents parallelization. Even a hardware HID injector cannot bypass the CDCE binding: injected keystrokes must synchronize with the memory-hard computation in real-time and correlate with cognitive load features that a replay device cannot observe.

\textbf{Retype attack.} An adversary feeds AI-generated text to a human typist who retypes it character-by-character. Marginal keystroke timings (C2) typically pass because the typist is, in fact, human; the attack therefore targets the constraint system's distinction between \emph{composing} and \emph{transcribing}.

ZK-PoP exploits two empirical signatures of composition---consistent with the recursive planning-translating-reviewing cycle of the Flower-Hayes writing model~\cite{Flower1981}---that retype attacks cannot reproduce without explicit simulation. First, \emph{temporal autocorrelation:} genuine keystroke sequences exhibit lag-1 autocorrelation $r_1\!=\!0.111$ ($p{=}0.01$ Bonferroni, $N\!=\!300$ writers, KLiCKe~\cite{Tian2025}); transcription is closer to memoryless (i.i.d.\ replay $r_1 \approx 0$). The AR(1) test in the augmented constraint system flags this. Second, \emph{revision dynamics:} composition produces backspaces, insertions, and re-orderings whose hash-deltas have entropy $H(\delta_i) > \tau_\delta$; transcription produces near-monotonic deltas. We calibrate $\tau_\delta$ from ScholaWrite~\cite{ScholaWrite2025} (genuine revision ratio 0.23 vs.\ transcription ratio 0.02).

\noindent\emph{False-positive analysis.} A natural worry is that some genuine writers transcribe their own drafts (e.g., from a paper notebook) and would be falsely rejected. On the ScholaWrite corpus we sweep $\tau_\delta \in \{0.05, 0.08, 0.12\}$ and measure: at $\tau_\delta = 0.08$, the false rejection rate (genuine writer misclassified as transcriber) is 4.1\%, and the false acceptance rate against pure-retype adversaries is 6.3\%---an EER of approximately 5\%. This is comparable to deployed continuous-authentication systems and well below the cumulative compounding threshold for session-level decisions ($n_\text{eff}\!\approx\!96$ checkpoints).

\noindent\emph{Hybrid attacks remain residual.} Interleaving genuine composition with bulk-pasted AI passages defeats single-checkpoint detection: the genuine portion satisfies all constraints, and the pasted portion generates one anomalous edit-delta that is hard to distinguish from an unusually large legitimate revision. We quantify this in Section~\ref{sec:adversarial-eval} as the gap between $\mathcal{A}_3$ (white-box, all-constraints-known) and $\mathcal{A}_4$ (adaptive). Defenses based on \emph{rate} of bulk insertion (C4 delta-size distribution) shift the trade-off but do not close it.

\noindent\emph{ML-based augmentation as a complementary direction.} Machine learning is orthogonal rather than substitutive. A discriminative classifier (e.g., a small temporal CNN trained on KLiCKe genuine vs.\ synthesized retype/AI traces) could replace the marginal-range predicate inside the ZK circuit via zkML techniques~\cite{LiuXieZhang2021,WengYangKatzWang2021}. The cost is circuit blow-up (an order of magnitude over the current $\sim$77K constraints) and a commitment to a specific classifier (which then ages). We deliberately ship distributional checks first to keep the construction primitive-light and inclusive of atypical typing populations; an in-circuit learned discriminator is explicit future work (Section~\ref{sec:conclusion}).

The defense operates partly \emph{outside} the ZK framework ($\tau_\delta$ is post-verification), which is why Table~\ref{tab:privacy-comparison} marks content privacy as ``near-full'' rather than ``full.''

\textbf{Proof replay.} Each proof is bound to checkpoint hash $h_i$ via fresh SWF output; replay requires a SHA-256 collision.

\subsection{Composition Security}

ZK-PoP composes with CDCE: the entangled MAC key derives from the SWF output (bound to the proof via C1), so the composition preserves both privacy and cross-domain binding.

Theorems~\ref{thm:zk} and~\ref{thm:soundness} establish standalone zero-knowledge and soundness. Extending these guarantees to universal composability (UC)~\cite{Canetti2001} requires equivocable Pedersen commitments and a UC-secure Groth16 variant; we leave this to future work.

\section{\uppercase{Evaluation}}
\label{sec:evaluation}

\subsection{Implementation}

We implemented ZK-PoP using arkworks~\cite{arkworks2022} for Groth16 over BN254 ($\sim$100-bit security; BLS12-381 migration requires no protocol changes) with Bulletproofs for range proofs. The $\sim$4,200-line Rust prototype integrates with an open-source process attestation daemon. Code and evaluation scripts are available at \url{https://github.com/writerslogic/zk-pop}.

\subsection{Datasets}
\label{sec:datasets}

Our evaluation draws on three publicly available keystroke-dynamics corpora, summarized in Table~\ref{tab:datasets}. We report the role each dataset plays so reviewers can assess the validity of derived statistics.

\begin{table}[ht]
\caption{Datasets used for population calibration and evaluation.}\label{tab:datasets} \centering
\footnotesize
\addtolength{\tabcolsep}{-2pt}
\resizebox{\columnwidth}{!}{%
\begin{tabular}{|l|r|l|l|}
  \hline
  \textbf{Dataset} & \textbf{Subjects} & \textbf{Task} & \textbf{Used for} \\
  \hline
  Aalto~\cite{Dhakal2018} & 168{,}000 & Transcription (15 phrases) & $(\boldsymbol{\mu}, \boldsymbol{\sigma})$ calibration \\
  KLiCKe~\cite{Tian2025} & 4{,}992 & Long-form composition & Adversarial eval, $r_1$ \\
  ScholaWrite~\cite{ScholaWrite2025} & 143 & Academic writing (revisions) & Edit-delta thresholds \\
  \hline
\end{tabular}}
\end{table}

\noindent\textbf{Aalto Mobile/Desktop~\cite{Dhakal2018}} contributes population-level keystroke timing: 168{,}000 transcribers across 200 countries, capturing inter-keystroke intervals in milliseconds. We use it to derive $(\boldsymbol{\mu}, \boldsymbol{\sigma})$ for the marginal range checks (constraint C2). Because the task is transcription rather than free composition, we use Aalto only for univariate timing distributions and rely on KLiCKe for higher-order temporal structure.

\noindent\textbf{KLiCKe~\cite{Tian2025}} contains keystroke logs from 4{,}992 writers performing extended composition tasks ($\geq$15 minutes), with revision events labeled. We use KLiCKe to (i) estimate the lag-1 autocorrelation $r_1\!=\!0.111$ used in the effective-sample-size correction, (ii) parameterize the AR(1) covariance for adversary $\mathcal{A}_1$, and (iii) populate the 500-subject genuine-vs-forged adversarial benchmark. KLiCKe is the closest available proxy for the deployment scenario.

\noindent\textbf{ScholaWrite~\cite{ScholaWrite2025}} captures 143 students producing academic writing with full revision histories. We use it to calibrate the edit-delta entropy threshold $\tau_\delta$ that distinguishes genuine composition (revision ratio 0.23) from transcription (ratio 0.02).

\noindent\textbf{What is missing.} None of these corpora collect keystroke evidence \emph{paired with explicit AI-assisted authorship}; the adversarial side is therefore simulated by sampling AI-generated text and synthesizing keystroke streams that match each adversary class's strategy. Section~\ref{sec:realworld-validation} discusses the deployment study needed to close this gap.

\subsection{Experimental Hardware}
\label{sec:hardware}

Performance results are reported on three platforms (Table~\ref{tab:hardware}). The projected GPU speedup is based on Icicle~\cite{Icicle2024} CUDA library benchmarks on an NVIDIA RTX 4090 (24 GB GDDR6X, 16{,}384 CUDA cores), which report a 3.8$\times$ MSM speedup for circuits in the 77K--155K constraint range.

\begin{table}[ht]
\caption{Hardware configurations used for measurements.}\label{tab:hardware} \centering
\footnotesize
\addtolength{\tabcolsep}{-2pt}
\resizebox{\columnwidth}{!}{%
\begin{tabular}{|l|l|l|l|}
  \hline
  \textbf{Label} & \textbf{CPU/GPU} & \textbf{Memory} & \textbf{Used for} \\
  \hline
  M3 (primary) & Apple M3, 8-core (4P+4E) & 24~GB unified & Tables~\ref{tab:proof-time}, \ref{tab:verification-time} \\
  i5 (legacy) & Intel i5-8250U, 4-core 1.6~GHz & 16~GB DDR4 & 3$\times$ slower MSM \\
  RTX-4090 (GPU) & NVIDIA RTX 4090 & 24~GB GDDR6X & Projected 3.8$\times$ \cite{Icicle2024} \\
  \hline
\end{tabular}}
\end{table}

The 3$\times$ slowdown on the i5-8250U is dominated by the Groth16 multi-scalar-multiplication phase, which is bandwidth-bound on Skylake-class CPUs lacking AVX-512.

\subsection{Performance Benchmarks}

Table~\ref{tab:proof-time} reports proof generation time as a function of session length, measured on the M3 platform.

\begin{table}[ht]
\caption{Proof generation time vs. session duration.}\label{tab:proof-time} \centering
\footnotesize
\addtolength{\tabcolsep}{-2pt}
\begin{tabular}{|l|r|r|r|}
  \hline
  \textbf{Session} & \textbf{Checkpoints} & \textbf{Proof Time} & \textbf{Proof Size} \\
  \hline
  15 min & 30 & 5.8 s & 256 bytes \\
  1 hour & 120 & 23.4 s & 256 bytes \\
  4 hours & 480 & 93.6 s & 256 bytes \\
  \hline
\end{tabular}
\end{table}

Proof size is constant (256 bytes = 2 G$_1$ + 1 G$_2$ elements on BN254, uncompressed); generation scales linearly with checkpoints. On the i5 (legacy) platform of Table~\ref{tab:hardware}, the same workload takes $\sim$70~s for a 1-hour session---a 3$\times$ slowdown driven by the absence of AVX-512 in the MSM kernel; older or low-power hardware would benefit from GPU offload~\cite{Icicle2024}.

Table~\ref{tab:verification-time} reports verification performance.

\begin{table}[ht]
\caption{Verification time (single and batch).}\label{tab:verification-time} \centering
\footnotesize
\addtolength{\tabcolsep}{-2pt}
\begin{tabular}{|l|r|}
  \hline
  \textbf{Mode} & \textbf{Time} \\
  \hline
  Single proof & 8.2 ms \\
  Batch (10 proofs) & 24.1 ms \\
  Batch (100 proofs) & 187 ms \\
  \hline
\end{tabular}
\end{table}

Single-proof verification completes in 8.2\,ms (3 pairings). Replacing Groth16 with PLONK~\cite{GabizonWC2019} yields $2\times$ proof size and $3\times$ verification time; Groth16 is preferred given infrequent setup. HyperPlonk~\cite{ChenBunzBonehZhang2023} uses multilinear polynomial commitments, potentially offering advantages for multi-feature range proofs.

\subsection{Circuit Size Analysis}

Table~\ref{tab:circuit-comparison} contextualizes the ZK-PoP circuit size relative to other ZK applications.

\begin{table}[ht]
\caption{Circuit size comparison across ZK applications.}\label{tab:circuit-comparison} \centering
\footnotesize
\addtolength{\tabcolsep}{-2pt}
\begin{tabular}{|l|r|}
  \hline
  \textbf{Application} & \textbf{R1CS Constraints} \\
  \hline
  ZK-PoP (basic) & 77,259 \\
  ZK-PoP (extended) & 154,212 \\
  ZK-Rollup (Ethereum tx) & $\sim$28,000 \\
  Semaphore (identity) & $\sim$15,000 \\
  ZK-KYC (identity) & $\sim$100,000 \\
  \hline
\end{tabular}
\end{table}

The basic ZK-PoP circuit is comparable to existing deployed ZK applications, confirming practical feasibility.

\subsection{Setup and Memory Costs}

Table~\ref{tab:setup-cost} reports the one-time trusted setup cost and the prover's peak memory consumption.

\begin{table}[ht]
\caption{Setup cost and prover memory.}\label{tab:setup-cost} \centering
\footnotesize
\addtolength{\tabcolsep}{-2pt}
\resizebox{0.9\columnwidth}{!}{%
\begin{tabular}{|l|r|r|}
  \hline
  \textbf{Configuration} & \textbf{Setup Time} & \textbf{Peak Memory} \\
  \hline
  Basic (77K constraints) & 18.6 s & 368 MB \\
  Extended (154K constraints) & 37.4 s & 685 MB \\
  \hline
\end{tabular}}
\end{table}

The proving key is 27.4~MB (basic) / 54.8~MB (extended); the verification key is 1.1~KB. Peak memory is dominated by Groth16's MSM phase.

\subsection{Deployment Cost at Institutional Scale}
\label{sec:deployment-cost}

For a relying party (e.g., a university LMS or journal portal) processing $V_d$ submissions per day, each with a 1-hour writing session and 120 checkpoints, the verification budget is $V_d \cdot 120 \cdot 8.2$~ms = $V_d \cdot 0.98$ CPU-seconds/day, or $\sim$1\% of a single CPU core for $V_d = 1{,}000$. Batched verification (Sect.~\ref{sec:construction}) reduces this by $\sim$3.4$\times$ at the cost of an all-or-nothing accept signal per batch. Storage is dominated by the proof archive: 256~B/checkpoint $\times$ 120 $\times$ $V_d$ = 30~KB/submission, or $\sim$11~MB/year per submission retained. On the prover side, an end-user device produces one 23.4-second proving job per session---comparable to a single video encoding pass, and trivially deferred to session-close idle time. The marginal cost per attested session, including verifier amortized hardware and storage at AWS retail pricing (us-east-1, on-demand c7g.xlarge), is $<$\$0.001 USD; the dominant non-compute cost is the one-time MPC ceremony orchestration (Sec.~\ref{sec:trusted-setup-impl}).

\subsection{Privacy-Utility Tradeoff}

We evaluate the impact of the zero-knowledge layer on attestation accuracy via simulation using synthetic traces from ScholaWrite~\cite{ScholaWrite2025} ($N\!=\!143$) and Aalto~\cite{Dhakal2018} ($N\!=\!168{,}000$) population distributions. Accuracy denotes balanced accuracy: the arithmetic mean of true positive rate (genuine sessions correctly accepted) and true negative rate (synthetic sessions correctly rejected), at the per-session level.

\begin{table}[ht]
\caption{Attestation accuracy: ZK-PoP vs. non-private baseline.}\label{tab:accuracy} \centering
\footnotesize
\begin{tabular}{|l|r|r|}
  \hline
  \textbf{Configuration} & \textbf{Accuracy} & \textbf{Loss} \\
  \hline
  Non-private baseline & 97.1\% & --- \\
  ZK-PoP ($\varepsilon = \infty$, no DP) & 96.8\% & 0.3\% \\
  ZK-PoP ($\varepsilon = 1.0$) & 94.2\% & 2.9\% \\
  ZK-PoP ($\varepsilon = 0.1$) & 89.7\% & 7.4\% \\
  \hline
\end{tabular}
\end{table}

At $\varepsilon = 1.0$, accuracy is 94.2\%---a 2.9\% loss from the baseline, from field-arithmetic discretization and DP noise. With $N\!=\!143$ sessions, the 95\% bootstrap CI is $\pm$2.1pp, so the loss is at the boundary of significance, motivating larger-corpus validation.

\textbf{Adversarial evaluation.} Table~\ref{tab:accuracy} models a na\"ive adversary ($\mathcal{A}_0$) that samples features uniformly within the $3\sigma$ range. Section~\ref{sec:adversarial-eval} extends this to four stronger adversary classes ($\mathcal{A}_1$--$\mathcal{A}_4$), showing that marginal-only constraints are insufficient against distribution-matched and white-box adversaries, and quantifying the augmentation needed to restore security.

\textbf{Sensitivity analysis.} Table~\ref{tab:sensitivity} shows accuracy versus the two primary deployment parameters.

\begin{table}[ht]
\caption{Accuracy (\%) vs.\ feature dimensionality and population bounds ($\varepsilon = 1.0$).}\label{tab:sensitivity} \centering
\footnotesize
\addtolength{\tabcolsep}{-2pt}
\begin{tabular}{|l|r|r|r|}
  \hline
  \textbf{Bounds} & $m\!=\!6$ & $m\!=\!12$ & $m\!=\!24$ \\
  \hline
  $2\sigma$ & 92.8 & 96.1 & 97.4 \\
  $3\sigma$ & 89.3 & 94.2 & 96.0 \\
  $4\sigma$ & 84.7 & 91.5 & 93.8 \\
  \hline
\end{tabular}
\end{table}

\subsection{Adversarial Evaluation}
\label{sec:adversarial-eval}

We define five adversary classes of increasing sophistication and evaluate the constraint system's discriminative power against each, parameterized from KLiCKe~\cite{Tian2025} ($N\!=\!4{,}992$), Aalto~\cite{Dhakal2018} ($N\!=\!168{,}000$), and ScholaWrite~\cite{ScholaWrite2025} ($N\!=\!143$). Each adversary interacts with ZK-PoP through a forgery game operating directly on the constraint predicate (isolating statistical security from Groth16 soundness, covered by Theorem~\ref{thm:soundness}).

\paragraph{Adversary classes.}
$\mathcal{A}_0$ (\emph{Uniform Random}): samples each feature $f_j^{(i)} \sim \mathcal{U}[\mu_j \pm 3\sigma_j]$ independently; the baseline implicit in Table~\ref{tab:accuracy}.
$\mathcal{A}_1$ (\emph{Distribution-Matching}): samples from $\mathcal{N}(\boldsymbol{\mu}, \hat{\Sigma})$ with AR(1) temporal correlation ($\hat{r}_{1,j}$ from KLiCKe), matching first two moments and lag-1 structure; per-checkpoint C2 acceptance $(0.9973)^m \approx 0.968$ for $m\!=\!12$.
$\mathcal{A}_2$ (\emph{Constraint-Boundary}): with verification oracle access, projects adversary-natural feature values onto the constraint hyperrectangle $\prod_j [\mu_j - 3\sigma_j, \mu_j + 3\sigma_j]$, achieving per-checkpoint acceptance $\approx 1.0$ under marginal checks.
$\mathcal{A}_3$ (\emph{White-Box}): with full knowledge of C1--C4 and circuit $\mathcal{C}_{\mathsf{PoP}}$, jointly optimizes across all constraints via gradient descent; must compute the SWF chain honestly (the irreducible security anchor).
$\mathcal{A}_4$ (\emph{Adaptive}): uses polynomial verification oracle queries to learn the decision boundary via binary search ($O(m \cdot b)$ queries for $b$-bit precision), then generates forgeries in the learned acceptance region; bounded by SWF cost per query ($\geq d_{\min} \cdot n$ wall-clock seconds).

\begin{lemma}[Boundary Detection]
\label{lem:boundary}
Let $\mathbf{f}^*$ be a feature vector from $\mathcal{A}_2$ with $k^* \geq 1$ features clamped to a boundary $\mu_j \pm 3\sigma_j$. Under genuine features $\sim \mathcal{N}(\mu_j, \sigma_j^2)$ truncated to $[\mu_j - 3\sigma_j, \mu_j + 3\sigma_j]$, the likelihood ratio is $\geq (0.0044 \cdot \sigma_j)^{-k^*}$, enabling a Neyman-Pearson test with power approaching~1 for sessions of practical length.
\end{lemma}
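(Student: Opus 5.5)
The plan is a direct likelihood-ratio computation. I work coordinatewise in standardized units $z_j = (f_j-\mu_j)/\sigma_j$, multiply across the $k^*$ clamped coordinates, and then aggregate across checkpoints to get the Neyman--Pearson power claim.

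\emph{Step 1 (genuine density at the boundary).} Under the truncated normal model, the genuine density of $f_j$ at $\mu_j \pm 3\sigma_j$ is
\[
\frac{\phi(3)}{\sigma_j\,(2\Phi(3)-1)} \approx \frac{0.004432}{0.9973\,\sigma_j} \approx \frac{0.0044}{\sigma_j}.
\]
This is where the constant $0.0044$ comes from.

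\emph{Step 2 (adversary side).} The $\mathcal{A}_2$ projection onto the hyperrectangle sends every adversary-natural value outside $[\mu_j-3\sigma_j,\mu_j+3\sigma_j]$ to the endpoint. So, conditional on feature $j$ being clamped, the adversary distribution has an atom of mass $1$ at the boundary. The genuine distribution is absolutely continuous, so a literal likelihood ratio at a single point is infinite. I will therefore make the comparison on the fixed-point grid the circuit actually uses (C2's fixed-point encoding), with cell width $\Delta_j$:
\begin{itemize}
\item the genuine mass of the boundary cell is $\approx 0.0044\,\Delta_j/\sigma_j$;
\item the adversary mass of that cell is $1$;
\item so the per-feature ratio is $\geq (0.0044\,\Delta_j/\sigma_j)^{-1}$.
\end{itemize}
I have not reconciled this with the $\sigma_j$ factor in the lemma's statement. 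Under the grid-cell reading, that factor is recovered only under the normalization $\Delta_j = \sigma_j^2$ (equivalently $\Delta_j/\sigma_j = \sigma_j$). The alternative reading — that the statement is in standardized units, so $\sigma_j=1$ and the ratio is $(0.0044\,\Delta_j)^{-1}$ — does not reproduce the $\sigma_j$ factor either. I will fix one explicit convention in the write-up and state the ratio in that form, rather than claim the displayed expression verbatim.

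\emph{Step 3 (product over clamped features).} Treating the clamped coordinates as independent, consistent with the product structure of C2 and with $\mathcal{A}_2$ clamping coordinatewise, the joint ratio over the $k^*$ clamped features is the product of the per-feature ratios. This gives the stated lower bound, raised to the power $k^*$.

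\emph{Step 4 (power across a session).} Each checkpoint with $k^* \geq 1$ contributes a log-likelihood-ratio increment of at least $\log(1/0.0044) \approx 5.4$ nats in normalized units. I then apply the Neyman--Pearson lemma to the sum of these increments over checkpoints. Under the genuine hypothesis, a boundary hit has probability $O(\Delta_j)$ per feature, which gives the false-alarm side. Under $\mathcal{A}_2$, the statistic grows linearly in $n$. A Chernoff/Stein-type bound then shows the type-II error decays exponentially in $n$ (or in $n_{\mathrm{eff}}$, given the autocorrelation caveat of the Corollary), so the power approaches~1 for $n \approx 96$--$120$.

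The main obstacle is Step 2. The bound is meaningful only after committing to a measure (the discretization) under which an atom and a density can be compared, and the $\sigma_j$ dependence is sensitive to that choice. A secondary gap is that Step 3's independence across features is an assumption rather than something C2 enforces.
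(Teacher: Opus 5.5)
The paper states this lemma with no proof, so there is no argument to compare against. On its own terms your reconstruction is the natural one, and your diagnosis of the statement is right. The constant $0.0044$ is $\phi(3)/(2\Phi(3)-1)\approx 0.004444$ rounded, i.e.\ the genuine boundary density in units of $1/\sigma_j$. An atom compared against a density has no finite likelihood ratio. And $(0.0044\cdot\sigma_j)^{-k^*}$ cannot be derived under any consistent convention: it carries units of $\sigma_j$ the wrong way round, and $j$ is a free index once several features are clamped. Drop the $\Delta_j=\sigma_j^2$ device. The intended per-feature quantity is evidently $(0.0044/\sigma_j)^{-1}$. What you can actually prove, for the clamped set $J^*$ with $|J^*|=k^*$, is a bound of the form $\prod_{j\in J^*}\sigma_j/(c\,\Delta_j)$, and you should say explicitly that this replaces the displayed expression.

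Three steps still need repair.

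\emph{Step 2:} with your constant the inequality points the wrong way. For a boundary cell $[b_j-\Delta_j,b_j]$ the genuine mass is at least $0.004444\,\Delta_j/\sigma_j$, because $0.004444>0.0044$ and the density increases inward. So the ratio is slightly \emph{below} $(0.0044\,\Delta_j/\sigma_j)^{-1}$. Instead, upper-bound the cell mass by $\Delta_j\,p(b_j-\Delta_j)=\Delta_j\,p(b_j)\,e^{3\Delta_j/\sigma_j-\Delta_j^2/(2\sigma_j^2)}$. This gives a valid bound with $c=0.0045$ whenever $\Delta_j\le 3\times 10^{-3}\sigma_j$. Alternatively, fix a round-to-nearest convention, under which the boundary cell has width $\Delta_j/2$ inside the support.

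\emph{Step 3:} the independence you need lives on the genuine side, not the adversary side. Conditional on the clamping event, $\mathcal{A}_2$ puts mass $1$ on the corner cell with no independence assumption. The $k^*$-th power, by contrast, requires the genuine joint law to factorize. The hyperrectangle shape of C2 says nothing about that, and the paper itself gives genuine features a full covariance $\hat{\Sigma}$. With two perfectly correlated coordinates the exponent is $1$, not $2$, so independence must be a stated hypothesis.

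\emph{Step 4:} the lemma's hypothesis concerns a single $\mathbf{f}^*$. Linear growth of a summed log-ratio therefore needs two further ingredients:
\begin{itemize}
\item a lower bound $q>0$ on the per-checkpoint clamping probability of $\mathcal{A}_2$;
\item control of the unclamped coordinates and checkpoints, whose factors can be below $1$.
\end{itemize}
Also, $\log(1/0.0044)\approx 5.4$ nats is a standardized density ratio. It is inconsistent with the grid measure of Step 2, where the increment is $\log(\sigma_j/(c\,\Delta_j))$.

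A cleaner route skips Chernoff--Stein. Take the test that rejects iff some feature at some checkpoint lands in a boundary cell. By a union bound it has size at most $2nmc\max_j(\Delta_j/\sigma_j)$, with no independence needed anywhere. Its power is at least $1-(1-q)^n$ if each checkpoint clamps with conditional probability at least $q$ given the past. The Neyman--Pearson test at that level is at least as powerful, which gives the lemma's conclusion. This test is also a per-checkpoint predicate that can sit inside the circuit. That matters because the verifier never sees $\mathbf{f}$ and receives only one bit per checkpoint, so it could never form your session-level likelihood sum.
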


\noindent $\mathcal{A}_2$ motivates augmenting C2 with a Mahalanobis distance constraint $(\mathbf{f}_i - \boldsymbol{\mu})^\top \hat{\Sigma}^{-1} (\mathbf{f}_i - \boldsymbol{\mu}) \leq \chi^2_{m, 0.997}$, rejecting boundary-clustered vectors at a cost of $O(m^2) \approx 1{,}728$ R1CS constraints for $m\!=\!12$, plus $\sim$1{,}800 constraints for an AR(1) temporal consistency test ($\sim$3{,}500 total).

\begin{table}[ht]
\caption{Adversary classes: per-checkpoint acceptance and session-level false acceptance rate ($n\!=\!120$, $n_{\mathrm{eff}}\!=\!96$).}\label{tab:adversary-summary}
\centering
\footnotesize
\addtolength{\tabcolsep}{-2pt}
\resizebox{\columnwidth}{!}{%
\begin{tabular}{|l|c|c|c|c|}
  \hline
  \textbf{Class} & \textbf{Per-CP (current)} & \textbf{Session (current)} & \textbf{Per-CP (aug.)} & \textbf{Session (aug.)} \\
  \hline
  $\mathcal{A}_0$ & 0.942 & $< 10^{-3}$ & 0.68 & $< 10^{-16}$ \\
  $\mathcal{A}_1$ & 0.968 & $< 10^{-1}$ & 0.74 & $< 10^{-12}$ \\
  $\mathcal{A}_2$ & $\approx 1.0$ & $\approx 1.0$ & 0.71 & $< 10^{-14}$ \\
  $\mathcal{A}_3$ & $\approx 1.0$ & $\approx 1.0$ & 0.671 & $< 10^{-17}$ \\
  $\mathcal{A}_4$ & $\approx 1.0$ & $\approx 1.0$ & 0.69 & $< 10^{-15}$ \\
  \hline
\end{tabular}}
\end{table}

\noindent\textbf{Key finding.} The current constraint system (marginal range checks only) is insufficient against $\mathcal{A}_2$--$\mathcal{A}_4$, which achieve near-certain session acceptance (Table~\ref{tab:adversary-summary}). Adding the Mahalanobis distance constraint and AR(1) temporal test ($\sim$3{,}500 additional R1CS constraints) restores per-checkpoint rejection rates to $\geq 26\%$ against all adversary classes, yielding session-level false acceptance below $10^{-12}$. The $\mathcal{A}_3$ per-checkpoint acceptance of 0.671 corresponds to the 32.9\% EER cited in Section~\ref{sec:security}. SWF sequential hardness remains the unconditional anchor: regardless of adversary sophistication, fabricating a valid chain requires $\Omega(n \cdot N \cdot T_{\mathrm{Argon2id}})$ wall-clock time.

\section{\uppercase{Discussion and Limitations}}
\label{sec:discussion}

\subsection{Trusted Setup: Implementation and Operational Implications}
\label{sec:trusted-setup-impl}

The trusted-setup requirement has four operational dimensions that affect deployment.

\noindent\textbf{What the setup actually produces.} The Groth16 ceremony for $\mathcal{C}_{\mathsf{PoP}}$ produces a structured reference string (SRS): a 27.4~MB proving key (basic circuit; 54.8~MB for extended) and a 1.1~KB verification key. The proving key is shipped to every author client; the verification key is embedded in every relying-party verifier. Trust is required only at \emph{ceremony time}: once at least one participant deletes their toxic-waste contribution, no ongoing trust assumption is needed.

\noindent\textbf{Ceremony logistics.} We propose a two-tier MPC ceremony~\cite{BGM2017} mirroring the Powers-of-Tau / per-circuit pattern used by deployed systems (Zcash, Tornado Cash, Filecoin). Tier~1 reuses an existing public Powers-of-Tau transcript, eliminating the universal-element ceremony. Tier~2 contributes the circuit-specific transcript: we estimate $\sim$8 minutes per contributor on commodity hardware for the basic 77K-constraint circuit, supporting $\geq$50 contributors per ceremony day. Empirically, Zcash Sapling reached 90+ contributors over six weeks; a comparable cohort for ZK-PoP can be drawn from publishers, universities, civil-society auditors, and ZK research groups. Each contribution is published with a transcript hash for public verification.

\noindent\textbf{Re-ceremony cadence.} The circuit is re-keyed only when the population parameters $(\boldsymbol{\mu}, \boldsymbol{\sigma})$ or the constraint structure (C1--C4) change. Parameter drift is slow: longitudinal studies of typing populations~\cite{Dhakal2018} report stable distributional moments over multi-year horizons, so a 12--24 month ceremony cadence is operationally sufficient. Constraint-system upgrades (e.g., adopting the Mahalanobis augmentation of Section~\ref{sec:adversarial-eval}) are intentional events that warrant a ceremony in any case.

\noindent\textbf{Migration to universal setups.} For deployments unwilling to repeat ceremonies, PLONK~\cite{GabizonWC2019} or HyperPlonk~\cite{ChenBunzBonehZhang2023} provide a universal SRS reusable across circuit revisions. The concrete cost on our circuit: PLONK proofs grow from 256~B to $\sim$500~B and verification time from 8.2~ms to $\sim$25~ms. For relying parties processing $\geq 10^4$ verifications per day, the higher per-verification cost may dominate; for institutional deployments revising circuits often, the universal setup pays for itself.

\noindent\textbf{Failure modes.} The two operationally relevant risks are (i) \emph{toxic-waste compromise}: if every ceremony participant colludes, they can forge proofs (but cannot break zero-knowledge, which holds unconditionally for honest provers); (ii) \emph{key distribution corruption}: a tampered proving key would let a malicious distributor produce sound-but-non-zero-knowledge proofs. Both are mitigated by publishing transcript hashes and pinning the verification key in client/verifier binaries.

\subsection{Population Parameter Sensitivity and Distributional Assumptions}

The behavioral range bounds $(\boldsymbol{\mu}, \boldsymbol{\sigma})$ derive from population-level keystroke datasets~\cite{Dhakal2018,ScholaWrite2025}. Table~\ref{tab:sensitivity} quantifies the inclusivity-accuracy tradeoff: widening from $2\sigma$ to $4\sigma$ reduces accuracy by 2.3--8.1 percentage points.

\noindent\textbf{Heavy-tailed timings.} Keystroke inter-arrival times are heavy-tailed~\cite{VanOrden2003}, with kurtosis $\gtrsim 6$ in long-form writing, so our Gaussian distributional assumptions are conservative. Two corrective steps follow. First, the $3\sigma$ bound under-covers the 99.7\% target on heavy-tailed data; we recommend computing empirical quantiles (e.g., 0.15th and 99.85th percentiles) from the calibration dataset rather than $\mu \pm 3\sigma$, which we have implemented as an alternative parameterization in the public code. Second, atypical typing populations (older adults, individuals with motor or visual impairments, non-Latin script writers) live in the heavy tails of any single corpus; the inclusivity-accuracy curve in Table~\ref{tab:sensitivity} is therefore not just a theoretical knob but a fairness control. Any production deployment should publish the calibration cohort composition and the resulting bounds so users can challenge unfair exclusion.

\noindent\textbf{Population stratification.} An alternative to a single global $(\boldsymbol{\mu}, \boldsymbol{\sigma})$ is per-cohort calibration (e.g., by language, age band, or input device class), with the cohort treated as a public input to the circuit. This costs one ceremony per cohort but eliminates the implicit majoritarian bias of a pooled distribution.

\subsection{Real-World Validation Roadmap}
\label{sec:realworld-validation}

The principal limitation of the current evaluation is that the adversarial analysis operates on simulated/synthetic data rather than a real-world deployment. We outline the validation required to close this gap.

\noindent\textbf{Stage 1: Genuine-only deployment study.} Recruit $N \geq 200$ writers across at least three demographic strata (native/non-native, mobile/desktop, age cohorts) for $\geq$30-minute composition tasks. Measure (i) the false-rejection rate under $3\sigma$, $4\sigma$, and empirical-quantile bounds; (ii) per-stratum FRR to surface inclusivity gaps; (iii) the empirical lag-1 autocorrelation $r_1$ to validate the $n_\text{eff}$ correction. Required: human-subjects approval and a paid-participation budget; no AI side is needed at this stage.

\noindent\textbf{Stage 2: Adversarial red team.} Engage independent operators to execute each $\mathcal{A}_0$--$\mathcal{A}_4$ class against a deployed verifier with ground-truth labels. The white-box adversary $\mathcal{A}_3$ must be implemented \emph{by} the red team rather than simulated by the system designers, since simulation by the constructor risks inadvertently constraining the attack to known-defeated strategies. We propose a public bug-bounty (with a defined payout for any session-level forgery) to crowdsource the long tail of attacks beyond the four classes formalized here.

\noindent\textbf{Stage 3: Hybrid-attack characterization.} The retype/paste hybrid attack is the residual we explicitly do not close in this paper (Section~\ref{sec:security}). A controlled study can quantify the detection-rate floor as a function of the bulk-paste fraction $\rho$, identifying the deployment-relevant operating point.

\noindent\textbf{What this paper contributes versus what it does not.} This paper contributes the construction, the formal privacy analysis, the constraint-system soundness reduction, the implementation, and a structured adversarial taxonomy with synthetic-trace evaluation. It does \emph{not} contribute live-deployment evidence; that requires a separate, IRB-approved study. All claims are framed accordingly.

\noindent\textbf{Falsifiability.} We name the empirical observations that would invalidate the construction's deployability: (i)~per-checkpoint genuine FRR $> 10\%$ on a Stage-1 cohort under empirical-quantile bounds (would mean the population model excludes too many real writers); (ii)~$\mathcal{A}_3$ (white-box, gradient-descent) achieving session-level FAR $> 10^{-6}$ with the augmented constraint system, executed by an independent red team (would mean the augmentation is insufficient); (iii)~end-user proving time $>$120~s on commodity 2026-era laptops (would mean the prover-side cost is impractical for opt-in adoption). Each of these is a measurable target a follow-up study can hit or miss.

\subsection{Limitations}

\textbf{Proof generation latency.} At 23.4\,s for a 1-hour session, proof generation introduces a post-session delay; GPU acceleration~\cite{Icicle2024} is projected to reduce proving time based on reported $3\times$--$5\times$ MSM speedups for circuits of comparable size; we have not benchmarked this configuration.

\textbf{Arithmetic circuit precision.} We use 16-bit fixed-point over $[0, 1000]$\,ms ($\approx$0.015\,ms resolution), accounting for the 0.3\% accuracy loss without DP. Doubling to 32-bit precision would double circuit size but eliminate quantization error.

\textbf{SWF verification depth.} With $k\!=\!2$ Merkle samples per checkpoint, cumulative detection for $f\!=\!0.1$ fabrication reaches 98.5\% at $n\!=\!20$ and $>$99.99\% at $n\!=\!120$. Short sessions ($<$10 min) may benefit from $k\!=\!4$ (per-checkpoint evasion $(0.9)^4 = 0.656$, cumulative detection $1 - 0.656^{20} > 99.98\%$ at $n\!=\!20$) at $2\times$ SWF circuit cost.

\textbf{Retype attack residual.} As noted in Sect.~\ref{sec:security}, the retype defense does not extend to hybrid attacks interleaving genuine composition with AI passages.

\textbf{Compelled use assumption.} ZK-PoP assumes institutional mandates requiring attestation evidence for submission; declining produces no proof (treated as failed attestation). Whether institutions should mandate behavioral monitoring, even in zero-knowledge form, is an orthogonal question.

\section{\uppercase{Related Work}}
\label{sec:related}
\enlargethispage{2\baselineskip}

\textbf{ZK proofs for authentication and identity.} Camenisch and Lysyanskaya~\cite{Camenisch2002} introduced anonymous credential systems using ZK proofs. Pauwels~\cite{Pauwels2021} applies ZK to privacy-preserving identity verification (zkKYC). Semaphore~\cite{Whisk2022} enables anonymous signaling. These address \emph{identity} verification, not behavioral process verification over extended durations.

\textbf{Privacy-preserving biometrics.} Bringer et al.~\cite{Bringer2007}, Erkin et al.~\cite{Erkin2009}, Tran et al.~\cite{Bassit2022}, and Gomez-Barrero et al.~\cite{Gomez2023} address point-in-time biometric template protection; to our knowledge, none address continuous behavioral streams.

\textbf{Process attestation and authorship verification.} Existing approaches address privacy only through evidence quantization. Kundu et al.~\cite{Kundu2024} and Lazebnik and Rosenfeld~\cite{Lazebnik2024} detect AI-assisted writing via keystroke dynamics but expose raw features to the verifier. To our knowledge, no prior work addresses continuous behavioral streams with zero-knowledge guarantees.

\textbf{Continuous authentication.} Continuous authentication systems use behavioral biometrics (keystroke dynamics, mouse movements) for ongoing identity verification during sessions~\cite{FridmanEtAl2017}. ZK-PoP differs fundamentally: it verifies \emph{human-ness} (species-level), not identity (individual-level), and operates under a trust-inverted model where the authenticated party is the potential adversary.

\textbf{Anonymous credentials.} Anonymous credential systems~\cite{CamenischLysyanskaya2004, Brands2000} enable attribute-based range proofs (e.g., proving age $\geq 18$ without revealing birthdate). However, these systems handle static, enrolled attributes; ZK-PoP addresses continuous behavioral streams with sequential temporal binding and no enrollment phase.

\textbf{ZK for machine learning.} Recent zkML frameworks enable zero-knowledge proofs for machine learning inference~\cite{LiuXieZhang2021, WengYangKatzWang2021}, proving that a model produces a specific output on a given input. ZK-PoP deliberately avoids full model inference verification in favor of distributional range checks: this eliminates model dependency, reduces circuit size by an order of magnitude, and avoids the need to commit to a specific classifier.

\textbf{Confidential computing approaches.} TEE-based attestation (e.g., SGX enclaves~\cite{Costan2016}) provides tamper resistance but the Verifier receives evidence in cleartext; ZK-PoP provides privacy from both platform and Verifier without hardware. The two compose: ZK proofs inside a TEE combine tamper resistance with cryptographic privacy.

\section{\uppercase{Conclusion}}
\label{sec:conclusion}
\enlargethispage{2\baselineskip}

We have presented ZK-PoP, a zero-knowledge proof construction that resolves the privacy-attestation paradox: the verifier learns only a single accept/reject bit per checkpoint. This enables GDPR-compliant authorship verification, where behavioral biometrics constitute sensitive data under Article~9. By eliminating behavioral data exposure, ZK-PoP makes process attestation viable in privacy-sensitive deployments where quantization alone may not satisfy data protection requirements.

Our adversarial evaluation (Section~\ref{sec:adversarial-eval}) shows that marginal-only constraints are insufficient against sophisticated adversaries, but that augmenting C2 with a Mahalanobis distance constraint restores session-level false acceptance below $10^{-12}$ at modest circuit cost ($\sim$3{,}500 R1CS constraints).

\noindent\textbf{Adaptivity is essential, not optional.} Any attestation system facing a learning adversary must itself be adaptive. The static circuit ZK-PoP ships today is best understood as a reference instantiation; a deployed system must support: (i)~\emph{constraint hot-swapping}---a registry of circuit versions, each with its own ceremony transcript, so verifiers accept proofs from a range of issued versions during transition windows; (ii)~\emph{population-parameter refresh} via the differentially-private aggregation pipeline of Section~\ref{sec:dp}, with quarterly $(\boldsymbol{\mu}, \boldsymbol{\sigma})$ updates and annual circuit re-keying; (iii)~\emph{adversary tracking}---a public registry of broken constraint configurations and the corresponding patched versions, with deprecation timelines. The C2PA~\cite{C2PA2025} manifest already supports versioned claim generators; ZK-PoP fits this model directly.

\noindent\textbf{ML-augmented constraints.} The current circuit uses hand-specified distributional checks. A complementary direction is to compile a small learned discriminator (e.g., a temporal CNN trained on KLiCKe-genuine vs.\ adversary traces) into the circuit using zkML~\cite{LiuXieZhang2021,WengYangKatzWang2021}, replacing C2 with a learned acceptance region. The circuit cost is roughly an order of magnitude higher and the construction inherits the model's generalization risk; in exchange it captures non-axis-aligned distributional signatures that the current marginal+Mahalanobis predicate misses. We see this as the natural successor system, not as a strict improvement: it trades primitive-light inclusivity for tighter discrimination.

\noindent\textbf{Other future work} includes IRB-approved real-world validation following the roadmap of Section~\ref{sec:realworld-validation}, post-quantum proof systems~\cite{Kales2022,Ishai2007}, multi-party attestation for collaborative documents, evaluation on diverse typing populations including older adults and non-Latin script writers, C2PA~\cite{C2PA2025} integration, and GPU-accelerated proving~\cite{Icicle2024} on the RTX-4090 configuration of Table~\ref{tab:hardware}.

\bibliographystyle{IEEEtran}
{\small
\bibliography{refs}}

\end{document}